\documentclass[11pt]{article}
\usepackage[utf8]{inputenc}
\usepackage{geometry}
\geometry{verbose,tmargin=1in,bmargin=1in,lmargin=1in,rmargin=1in}
\pagestyle{plain}
\usepackage{mathrsfs}
\usepackage{amsmath}
\usepackage{amssymb}
\usepackage{amsthm}

\usepackage{authblk}

\makeatletter

\newcommand{\noun}[1]{\textsc{#1}}


\usepackage{cite}

\usepackage{tikz}

\DeclareMathAlphabet{\mathscr}{OT1}{pzc}{m}{it}

\usepackage[ruled,vlined]{algorithm2e}

\usetikzlibrary{patterns}

\newtheorem{theorem}{Theorem}

\newtheorem{lemma}[theorem]{Lemma}


\newcommand{\opt}{\textsc{Opt}}
\newcommand{\OPT}{\opt}

\newcommand{\I}{\mathcal{I}}

\newcommand{\innerand}{\text{ and }}
\newcommand{\classNP}{\mathsf{NP}}
\newcommand{\classcoNP}{\mathsf{coNP}}
\newcommand{\classPSPACS}{\mathsf{PSPACE}}


\renewcommand{\epsilon}{\varepsilon}


\makeatother

\begin{document}

\title{Packing a Knapsack of Unknown Capacity}

\author{Yann Disser\thanks{Supported by the Alexander von Humboldt Foundation.}}
\author{Max Klimm}
\author{Nicole Megow\thanks{Supported by the German Science Foundation (DFG) under contract  ME 3825/1.}}
\author{Sebastian Stiller}

\affil{\small Department of Mathematics, Technische Universit\"at Berlin, Germany.
\\
\texttt{\{disser,klimm,nmegow,stiller\}@math.tu-berlin.de}.}
\maketitle
\begin{abstract}
We study the problem of packing a knapsack without knowing its capacity.
Whenever we attempt to pack an item that does not fit, the item is
discarded; if the item fits, we have to include it in the packing.
We show that there is always a policy that packs a value within factor
2 of the optimum packing, irrespective of the actual capacity. If
all items have unit density, we achieve a factor equal to the golden
ratio $\varphi\approx1.618$. Both factors are shown to be best possible.

In fact, we obtain the above factors using packing policies that are
\emph{universal} in the sense that they fix a particular order of
the items and try to pack the items in this order, independent of
the observations made while packing. We give efficient algorithms
computing these policies. On the other hand, we show that, for any
$\alpha>1$, the problem of deciding whether a given universal policy
achieves a factor of $\alpha$ is $\classcoNP$-complete. If $\alpha$
is part of the input, the same problem is shown to be $\classcoNP$-complete
for items with unit densities. Finally, we show that it is $\classcoNP$-hard
to decide, for given $\alpha$, whether a set of items admits a universal
policy with factor $\alpha$, even if all items have unit densities.
\end{abstract}
\smallskip
\section{Introduction}

In the standard knapsack problem we are given a set of items, each
associated with a size and a value, and a capacity of the knapsack.
The goal is to find a subset of the items with maximum value who's
size does not exceed the capacity. In this paper, we study the \emph{oblivious
}knapsack problem where the capacity of the knapsack is not given.
Whenever we try to pack an item, we observe whether or not it fits
the knapsack. If it does, the item is packed into the knapsack and
cannot be removed later. If it does not fit, we discard it and continue
packing with the remaining items. The central question of this paper
is how much we loose by not knowing the capacity, in the worst case.
The oblivious variant of the knapsack problem naturally arises whenever
items are prioritized by a different entity or at a different time
than the actual packing of the knapsack.

A solution to the oblivious knapsack problem is a policy that governs
the order in which we attempt to pack the items, depending only on
the observation which of the previously attempted items did fit into
the knapsack and which did not. In other words, a policy is a binary
decision tree with the item that is tried first at its root. The two
children of the root are the items that are tried next, which of the
two depends on whether or not the first item fits the knapsack, and
so on. We aim for a solution that is good for \emph{every} possible
capacity, compared to the best solution of the standard knapsack problem
for this capacity. Formally, a policy has \emph{robustness factor
$\alpha$ }if, for any capacity, packing according to the policy results
in a value that is at least a $1/\alpha$-fraction of the optimum
value for this capacity.

We show that the oblivious knapsack problem always admits a robustness
factor of 2. In fact, this robustness factor can be achieved with
a policy that packs the items according to a fixed order, irrespective
of the observations made while packing. Such a policy is called \emph{universal.
}We provide an algorithm that computes a 2-robust, universal policy
in time $\Theta(n\log n)$ for a given set of $n$ items. We complement
this result by showing that no robustness factor better than 2 can
be achieved in general, even by policies that are not universal. In
other words, the cost of not knowing the capacity is exactly 2.

We give a different efficient algorithm for the case that all items
have unit density, i.e., size and value of each item coincide. This
algorithm produces a universal policy with a robustness factor of
at most the golden ratio $\varphi\approx1.618$. Again, we show that
no better robustness factor can be achieved in general, even by policies
that are not universal. 

While good universal policies can be found efficiently, it is intractable
to compute the robustness factor of a \emph{given} universal policy
and it is intractable to compute the best robustness factor an instance
admits. Specifically, we show that, for any\emph{ }fixed\emph{ }$\alpha\in(1,\infty)$,
it is $\classcoNP$-complete to decide whether a given universal policy
is $\alpha$-robust. For unit densities we establish a slightly weaker
hardness result by showing that it is $\classcoNP$-complete to decide
whether a given universal policy achieves a \emph{given} robustness
factor $\alpha$. Finally, we show that, for given $\alpha$, it is
$\classcoNP$-hard to decide whether an instance of the oblivious
knapsack problem admits a universal policy with robustness factor
$\alpha$, even when all items have unit density.

\subsection*{Related work}

The knapsack problem has been studied for different models of imperfect
information. In the \emph{stochastic} knapsack problem, sizes and
values of the items are random variables. It is known that a policy
maximizing the expected value is $\classPSPACS$-hard to compute,
see Dean et al.~\cite{deanGV04}. The authors assume that the packing
stops when the first item does not fit the knapsack, and give a universal
policy that approximates the value obtained by an optimal, not necessarily
universal, policy by a factor of~$2$. They also provide a non-universal
policy within a factor of $3+\epsilon$ of the optimal policy. Bhalgat
et al.~\cite{bhalgatGK11} give an algorithm with an improved approximation
guarantee of $8/3+\epsilon$. They also give a PTAS for the case that
it is allowed to violate the capacity of the knapsack by a factor
of $1+\epsilon$.

In \emph{robust} knapsack problems, a set of possible scenarios for
the sizes and values of the items is given. Yu~\cite{yu96}, Bertsimas
and Sim~\cite{bertsimasS03}, Goetzmann et al.~\cite{GoetzmannST11},
and Monaci and Pferschy~\cite{monaciP11} study the problem of maximizing
the worst-case value of a knapsack under various models. Büsing et
al.~\cite{busingKK11-discrete} and Bouman et al.~\cite{boumanAH11}
study the problem from a computational point of view. Both allow for
an adjustment of the solution after the realization of the scenario.
Similar to our model, Bouman et al.~consider uncertainty in the capacity.

The notion of a \emph{robustness factor} that we adopt in this work
is due to Hassin and Rubinstein~\cite{hassinR02} and is defined
as the worst-case ratio of solution and optimum, over all realizations.
Kakimura et al.~\cite{kakimuraMS11} analyze the complexity of deciding
whether an $\alpha$-robust solution exists for a knapsack instance
with an unknown bound on the number of items that can be packed. Megow
and Mestre~\cite{megowM13} study a variant of the knapsack problem
with unknown capacity closely related to ours. In contrast to our
model, they assume that the packing stops once the first item does
not fit the remaining capacity. In this model, a universal policy
with a constant robustness factor may fail to exist, and, thus, Megow
and Mestre resort to \emph{instance-sensitive} performance guarantees.
They provide a PTAS that constructs a universal policy with robustness
factor arbitrarily close to the best possible robustness factor for
every particular instance.

The concept of \emph{obliviousness }is used in various other contexts
(explicitly or implicitly), such as hashing~(Carter and Wegman~\cite{carterW79}),
caching (Frigo et al.~\cite{frigoLPR99}, Bender et al.~\cite{benderCD02})
routing~(Valiant and Brebner~\cite{valiantB81}, Räcke~\cite{raecke09}),
TSP (Papadimitriou~\cite{papadimitriou94}, Deineko et al.~\cite{deinekoRW95},
Jia et al,~\cite{jiaLNRS05}), Steiner tree and set cover~(Jia et
al,~\cite{jiaLNRS05}), and scheduling~(Epstein et al.~\cite{epsteinLMMMSS12},
Megow and Mestre~\cite{megowM13}). In all of these works, the general
idea is that specific parameters of a problem instance are unknown,
e.g., the cache size or the set of vertices to visit in a TSP tour,
and the goal is to find a \emph{universal solution }that performs
well for all realizations of the hidden parameters.

Universal policies for the oblivious knapsack problem play a role
in the design of public key cryptosystems. One of the first such systems
-- the Merkle–Hellman knapsack cryptosystem \cite{merkleH78} -- is
based on particular instances that allow for a $1$-robust universal
policy for the oblivious knapsack problem. The basic version of this
cryptosystem can be attacked efficiently, e.g.,~by the famous attack
of Shamir~\cite{shamir82}. This attack uses the fact that the underlying
knapsack instance has exponentially increasing item sizes. A better
understanding of universal policies may help to develop knapsack-based
cryptosystems that avoid the weaknesses of Merkle and Hellman's.

\section{Preliminaries}

An instance of the \emph{oblivious knapsack problem} is given by a
set of $n$ items~$\mathcal{I}$, where each item $i\in\mathcal{I}$
has a non-negative \emph{value} $v(i)\in\mathbb{Q}_{\geq0}$ and a
strictly positive \emph{size} $l(i)\in\mathbb{Q}_{>0}$. For a subset
$S\subseteq\I$ of items, we write $v(S)=\sum_{i\in S}v(i)$ and $l(S)=\sum_{i\in S}l(i)$
to denote its total value and total size, respectively, of the items
in $S$. A \emph{solution} for instance $\mathcal{I}$ is a policy
$\mathscr{P}$ that governs the order in which the items are considered
for packing into the knapsack. The policy must be independent of the
capacity of the knapsack, but the choice which item to try next may
depend on the observations which items did and which items did not
fit the knapsack so far. Formally, a solution policy is a binary decision
tree that contains every item exactly once along each path from the
root to a leaf. The \emph{packing} $\mathscr{P}(C)\subseteq\I$ of
$\mathscr{P}$ for a fixed capacity $C$ is obtained as follows: We
start with $\mathscr{P}(C)=\emptyset$ and check whether the item
$r$ at the root of $\mathscr{P}$ fits the knapsack, i.e., whether
$l(r)+l(\mathscr{P}(C))\leq C$. If the item fits, we add $r$ to
$ $$\mathscr{P}(C)$ and continue packing recursively with the left
subtree of $r$. Otherwise, we discard $r$ and continue packing recursively
with the right subtree of $r$.

A \emph{universal policy }$\Pi$ for instance $\mathcal{I}$ is a
policy that does not depend on observations made while packing, i.e.,
the decision tree for a universal policy has a fixed permutation of
the items along every path from the root to a leaf. We identify a
universal policy with this fixed permutation and write $\Pi=(\Pi_{1},\Pi_{2},\dots,\Pi_{n})$.
Analogously to general policies, the packing $\Pi(C)\subseteq\mathcal{I}$
of a universal policy~$\Pi$ for capacity $C\leq l(\mathcal{I})$
is obtained by considering the items in the order given by the permutation~$\Pi$
and adding every item if it does not exceed the remaining capacity. 

We measure the quality of a policy for the oblivious knapsack problem
by comparing its packing with the optimal packing for each capacity.
More precisely, a policy $\mathscr{P}$ for instance $\mathcal{I}$
is called~$\alpha$-\emph{robust} \emph{for capacity} $C$, $\alpha\geq1$,
if it holds that~$v(\textsc{Opt}(\mathcal{I},C))\leq\alpha\cdot v(\mathscr{P}(C))$,
where $\textsc{Opt}(\mathcal{I},C)$ denotes an optimal packing for
capacity $C$. We say $\mathscr{P}$ is $\alpha$-\emph{robust} if
it is $\alpha$-robust for all capacities. In this case, we call~$\alpha$
the \emph{robustness factor} of policy $\mathscr{P}$.

\section{Solving the Oblivious Knapsack Problem\label{sec:Algorithm}}

In this section, we describe an efficient algorithm that constructs
a universal policy for a given instance of the oblivious knapsack
problem. The solution produced by our algorithm is guaranteed to pack
at least half the value of the optimal solution for any capacity $C$.
We show that this is the best possible robustness factor. 

The analysis of our algorithm relies on the classical \emph{modified
greedy} algorithm (cf.~\cite{korteV02}). We compare the packing
of our policy, for each capacity, to the packing obtained by the modified
greedy algorithm instead of the actual optimum. As the modified greedy
is a 2-approximation, to show that our policy is 2-robust it is sufficient
to show that its packing is never worse the one obtained by the modified
greedy algorithm. We briefly review the modified greedy algorithm.

\begin{algorithm}[b]   
  \DontPrintSemicolon 
  \KwIn{set of items $\I$, capacity $C$}
  \KwOut{subset $S \subseteq \I$ such that $l(S) \leq C$ and $v(S) \geq v(\OPT(\I,C))/2$}     
  \caption{\textsc{MGreedy}($\mathcal{I}$, $C$)}\label{alg:mgreedy}     

  $D \leftarrow \left<\text{items in }\{i\in\mathcal{I} \mid l(i) \leq C\}\text{ sorted  decreasingly by density} \right>$\;
  $k \;\leftarrow \max \{j \mid l(\{D_1,\dots,D_j\}) \leq C\}$\;   
  $P \leftarrow (D_1,\dots,D_{k})$, $s \leftarrow D_{k+1}$\;   
  \uIf{$v(P) \geq v(s)$}   
  {\Return $P$\; }   
  \Else   {\Return $\{s\}$\; } 
\end{algorithm}

Let~$d(i)=v(i)/l(i)$ denote the \emph{density} of item $i$. The
modified greedy algorithm (\textsc{MGreedy}) for a set of items~$\mathcal{I}$
and known knapsack capacity~$C$ first discards all items that are
larger than~$C$ from $\mathcal{I}$. The remaining items are sorted
in non-increasing order of their densities, breaking ties arbitrarily.
The algorithm then either takes the longest prefix~$P$ of the resulting
sequence that still fits into capacity~$C$, or the first item~$s$
that does not fit anymore, depending on which of the two has a greater
value. In the latter case, we say that~$s$ is a \emph{swap item}
(for capacity $C$) that and $C$ is a \emph{swap capacity}. In both
cases, we refer to~$P$ as the \emph{greedy set} for capacity~$C$.
See Algorithm~\ref{alg:mgreedy} for a formal description. 

For our analysis, it is helpful to fix the tie-breaking rule of the
greedy algorithm. To this end, we assume that there is a bijection
$t:\I\to\{1,2,\dots,n\}$, that maps every item $i\in\I$ to a \emph{tie-breaking
index} $t(i)$, and that the modified greedy algorithm initially sorts
the items decreasingly with respect to the tuple $\tilde{d}(\cdot)=(d(\cdot),t(\cdot))$,
i.e., the items are sorted non-increasingly by density and whenever
two items have the same density, they are sorted by decreasing tie-breaking
index. In the following, for two items $i,j$, we write $\tilde{d}(i)\succ\tilde{d}(j)$
if and only if $d(i)>d(j)$, or $d(i)=d(j)$ and $t(i)>t(j)$, and
say that $i$ has higher density than $j$.

We evaluate the quality of our universal policy by comparing it for
every capacity with the solution of \textsc{MGreedy}. This analysis
suffices because of the following well-known property of the modified
greedy algorithm.
\begin{theorem}
[cf.~\cite{korteV02}] For every instance $(\mathcal{I},C)$ of the
standard knapsack problem with known capacity, $v(\textsc{Opt}(\mathcal{I},C))\leq2\cdot v(\textsc{MGreedy}(\mathcal{I},C))$.\label{thm:greedy}

\end{theorem}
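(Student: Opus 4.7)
The plan is to reduce the statement to the well-known LP-based bound on the fractional knapsack. First, I would note that items with $l(i) > C$ appear in neither $\textsc{Opt}(\I,C)$ nor the output of \textsc{MGreedy}, so we may assume $l(i) \le C$ for all items. Write $P = (D_1,\dots,D_k)$ for the greedy prefix and $s = D_{k+1}$ for the first item that does not fit (if no such item exists, then $P$ already contains all items and $v(P) = v(\OPT(\I,C))$, so we are done).

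Next, I would invoke Dantzig's classical observation that the LP relaxation of knapsack is solved by taking items in order of non-increasing density and fractionally packing the first item that no longer fits. Concretely, let $\mu = (C - l(P))/l(s) \in [0,1)$ be the fraction of $s$ that fills the remaining capacity. Then the optimal fractional value is $v(P) + \mu \cdot v(s)$, and since the fractional relaxation upper-bounds the integral optimum,
\begin{equation*}
v(\OPT(\I,C)) \;\le\; v(P) + \mu \cdot v(s) \;<\; v(P) + v(s) \;\le\; 2 \max\{v(P), v(s)\}.
\end{equation*}

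Finally, the output of \textsc{MGreedy}$(\I,C)$ is exactly $\max\{v(P), v(s)\}$ by construction (the algorithm returns $P$ if $v(P) \ge v(s)$ and $\{s\}$ otherwise, and in both cases the returned set is feasible because $l(P) \le C$ and $l(s) \le C$). Combining with the displayed inequality gives $v(\OPT(\I,C)) \le 2 \cdot v(\textsc{MGreedy}(\I,C))$, as claimed.

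The only subtle point is justifying the LP bound $v(\OPT(\I,C)) \le v(P) + \mu v(s)$; I would either cite it from \cite{korteV02} (which the statement already does) or give a one-line exchange argument: any integer feasible solution has at most capacity $C$, hence, after sorting all items by density, is dominated by the greedy fractional solution via the standard density-swap argument. No step here is a real obstacle, since the result is entirely classical and the statement itself flags it with ``cf.~\cite{korteV02}''.
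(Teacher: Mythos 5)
Your proof is correct and is the standard argument; the paper itself gives no proof and simply cites Korte and Vygen, which presents exactly this Dantzig-style LP-relaxation bound followed by the observation that the greedy prefix and the swap item together cover the fractional optimum. One cosmetic nit: the strict inequality $v(P)+\mu v(s) < v(P)+v(s)$ requires $v(s)>0$; since the paper allows $v(i)=0$, it is cleaner to write $\leq$ there, which is all you need.
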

We are now ready to describe our algorithm $\textsc{Universal}$~(Algorithm~\ref{our_algo})
that produces a universal policy tailored to imitate the behavior
of $\textsc{MGreedy}$ without knowing the capacity. 

First, $\textsc{Universal}$ determines which items are swap items.
It then starts with an empty permutation, and considers the items
in order of non-decreasing sizes, inserting each item into the permutation.
Swap items are always placed in front of all items already in the
permutation, and all other items are inserted in front of the first
item in the permutation that has a lower density.

We prove the following result.

\RestyleAlgo{ruled,vlined} 
\begin{algorithm}[tb]   
  \DontPrintSemicolon 
  \KwIn{set of items $\I$} 
  \KwOut{sequence of items $\Pi$}
  \caption{\textsc{Universal}($\mathcal{I}$)\label{our_algo}}
 
  $L \leftarrow \left<\text{items in }\mathcal{I}\text{ sorted by non-decreasing size}\right>$\;
  $\Pi^{(0)} \leftarrow \emptyset$\;
  
  \For{$r \leftarrow 1,\dots,n$}   
  {     
    \uIf{$L_r$ is a swap item}     
    {$\Pi^{(r)} \leftarrow (L_r,\Pi^{(r-1)})$\;}     
    \Else     
    {       
      $j \leftarrow 1$\;       
      \While{$j \leq |\Pi|$ {\bf and} $\tilde{d}(\Pi_j) \succ \tilde{d}(L_r)$}       
      { $j \leftarrow j+1$\; }       
      $\Pi^{(r)} \leftarrow (\Pi^{(r-1)}_1,\dots,\Pi^{(r-1)}_{j-1},L_r,\Pi^{(r-1)}_j,\dots)$\;     
    }   
  }
  \Return $\Pi^{(n)}$\; 
\end{algorithm}
\begin{theorem}
The algorithm \noun{Universal} constructs a universal policy of robustness
factor~$2$. \label{thm:2-competitive} 
\end{theorem}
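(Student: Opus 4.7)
The plan is to reduce the claim to a pointwise comparison with the modified greedy algorithm. By Theorem~\ref{thm:greedy}, it suffices to prove $v(\Pi(C))\geq v(\textsc{MGreedy}(\I,C))$ for every capacity $C$, whence $v(\opt(\I,C))\leq 2\,v(\textsc{MGreedy}(\I,C))\leq 2\,v(\Pi(C))$. From Algorithm~\ref{our_algo} I read off three structural properties of $\Pi$: (i) any two swap items in $\Pi$ appear in non-increasing size, (ii) any two non-swap items appear in non-increasing $\tilde d$, and (iii) any non-swap item $n$ preceding a swap item $s$ in $\Pi$ satisfies both $l(n)\geq l(s)$ and $\tilde d(n)\succeq\tilde d(s)$. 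Each follows directly from processing items in non-decreasing size combined with the insertion rule (prepend for swap, by density for non-swap).

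A central structural lemma I would establish is: \emph{for any two swap items $s',s''$ with $l(s')\leq l(s'')$, we have $v(s')\leq v(s'')$}. If $d(s')\leq d(s'')$, this follows from $v(s')=d(s')\,l(s')\leq d(s'')\,l(s')\leq d(s'')\,l(s'')=v(s'')$. Otherwise $d(s')>d(s'')$, so $s'$ strictly precedes $s''$ in density order; at any capacity $C''$ witnessing $s''$ as the swap item of \textsc{MGreedy}, $l(s')\leq l(s'')\leq C''$, so $s'$ is unfiltered and must lie in the density-ordered greedy prefix $P''$, yielding $v(s'')>v(P'')\geq v(s')$.

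Fix $C$ and let $G=\textsc{MGreedy}(\I,C)$. In the case $G=\{s\}$, every item preceding $s$ in $\Pi$ has value at least $v(s)$: a swap predecessor $s'$ has $l(s')\geq l(s)$ by (i) and thus $v(s')\geq v(s)$ by the structural lemma, while a non-swap predecessor $n$ satisfies $l(n)\geq l(s)$ and $\tilde d(n)\succeq\tilde d(s)$ by (iii), giving $v(n)=d(n)\,l(n)\geq d(s)\,l(s)=v(s)$. Hence either some predecessor is packed by $\Pi(C)$, or $s$ itself is (since $l(s)\leq C$); in either case $v(\Pi(C))\geq v(s)$.

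The remaining case is $G=P=(D_1,\dots,D_k)$, the density-ordered greedy prefix. I would compare $\Pi(C)$ with $P$ via a charging argument: each blocked item $p\in P\setminus\Pi(C)$ was considered by $\Pi$ at a moment when adding $p$ would exceed~$C$, so some items of $\Pi(C)\setminus P$ must sit ahead of $p$, and these act as charge receivers. Using (ii) and the swap-monotonicity lemma, I would argue that the blocking items carry enough value to cover $v(p)$. A clean formalization maintains, during the parallel execution of $\Pi$ and the construction of $P$, an invariant parametrized by a density threshold~$\delta$: the value of items of density at least $\delta$ packed by $\Pi(C)$ dominates the value of items of density at least $\delta$ in $P$. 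The main obstacle is ensuring the invariant survives when a single large swap item of $\Pi(C)\setminus P$ occupies capacity that would otherwise hold several non-swap items of $P\setminus\Pi(C)$; here the swap-monotonicity lemma is crucial, as the displacing swap item carries enough value to compensate for the displaced items of $P$.
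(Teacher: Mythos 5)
Your overall strategy (reduce to showing $v(\Pi(C))\geq v(\textsc{MGreedy}(\I,C))$ via Theorem~\ref{thm:greedy}), the swap-monotonicity lemma, and the handling of swap capacities all match the paper in spirit, and that part of the argument is sound. However, there are two genuine problems.

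First, your structural property~(ii) — that any two non-swap items appear in non-increasing $\tilde d$ — is false. Consider four items $q,s,r,p$ with $(l,v)$ equal to $(1,0.8)$, $(2,1)$, $(2.5,1.75)$, $(3,1.8)$, hence densities $0.8,0.5,0.7,0.6$. One checks that $s$ and $r$ are swap items (each is worth more than the only smaller, denser item $q$), while $q$ and $p$ are not ($p$ is worth less than $v(q)+v(r)=2.55$). Processing by size gives $\Pi=(r,p,s,q)$: here $p$ and $q$ are both non-swap items, $p$ precedes $q$, yet $\tilde d(p)\prec\tilde d(q)$. The error in your derivation is that the insertion rule only guarantees that items strictly to the \emph{left} of a newly inserted non-swap item have higher $\tilde d$; it says nothing about items to its right that were already present (in particular, non-swap items sitting beyond a swap item that blocks the scan). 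Lemma~\ref{lem:densities} in the paper asserts only a \emph{local} density decrease immediately after a non-swap item, which is strictly weaker than your~(ii).

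Second, and more fundamentally, the case where $\textsc{MGreedy}(\I,C)$ is a greedy prefix $P$ is only sketched, and you explicitly flag the "main obstacle" without resolving it. This is exactly the crux of the proof. The paper closes this gap with Lemma~\ref{lem:order_violation_legitimation}: for \emph{every} item $\Pi_k$, $v(\Pi_k)\geq v\bigl(\{\Pi_j : j>k,\ \tilde d(\Pi_j)\succ\tilde d(\Pi_k)\}\bigr)$, i.e.\ the value of the entire tail of denser items is bounded by $v(\Pi_k)$. Given that, the exchange argument becomes short: take the first item $\Pi_k$ packed by $\Pi$ but not in $P$; all of $P\setminus\Pi(C)$ is denser than $\Pi_k$ and lies after it in $\Pi$, so its total value is at most $v(\Pi_k)\leq v(\Pi(C)\setminus P)$. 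Your proposed density-threshold invariant, resting partly on the false~(ii), does not have such a single inequality to anchor it, and the difficulty you identify (a single swap item of $\Pi(C)\setminus P$ displacing several denser items of $P$) is precisely what Lemma~\ref{lem:order_violation_legitimation} is designed to dominate. Without an analogue of that lemma, the non-swap-capacity case remains unproven.
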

Before we prove this theorem, we first analyze the structure of the
permutation output by $\textsc{Universal}$ in terms of density, size,
and value. First, we prove that every item following a non-swap item
has lower density.
\begin{lemma}
For a sequence $\Pi$ returned by $\textsc{Universal}$, we have $\tilde{d}(\Pi_{k})\succ\tilde{d}(\Pi_{k+1})$
for every non-swap item $\Pi_{k}$, $1\leq k<n$. \label{lem:densities} \end{lemma}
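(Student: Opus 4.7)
The plan is to prove the slightly stronger invariant that after the $r$-th iteration of \textsc{Universal}, every intermediate permutation $\Pi^{(r)}$ satisfies $\tilde{d}(\Pi^{(r)}_k) \succ \tilde{d}(\Pi^{(r)}_{k+1})$ whenever $k < |\Pi^{(r)}|$ and $\Pi^{(r)}_k$ is a non-swap item; applying this with $r = n$ recovers the lemma. I would proceed by induction on $r$, the base case $r = 0$ being vacuous.

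For the inductive step, I would distinguish the two branches of the loop body. If $L_r$ is a swap item, the algorithm prepends it to $\Pi^{(r-1)}$, so for every $k \geq 2$ the adjacent pair $(\Pi^{(r)}_k, \Pi^{(r)}_{k+1})$ equals $(\Pi^{(r-1)}_{k-1}, \Pi^{(r-1)}_k)$ and the inductive hypothesis applies directly, while position $k = 1$ now holds the newly inserted swap item and is exempt from the claim.

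If instead $L_r$ is a non-swap item, it is spliced in at the position $j$ chosen by the while loop. The loop's exit condition together with the fact that it was not exited for any smaller index yields $\tilde{d}(\Pi^{(r-1)}_{j-1}) \succ \tilde{d}(L_r)$ whenever $j > 1$, and $\tilde{d}(\Pi^{(r-1)}_{j}) \not\succ \tilde{d}(L_r)$ whenever $j \leq |\Pi^{(r-1)}|$. Since $t$ is a bijection, any two distinct items carry distinct $\tilde{d}$-values, so the second relation sharpens to $\tilde{d}(L_r) \succ \tilde{d}(\Pi^{(r-1)}_{j})$. This immediately provides the required strict comparison at the newly created pair $(L_r, \Pi^{(r-1)}_j)$, and also at the pair $(\Pi^{(r-1)}_{j-1}, L_r)$ whenever $\Pi^{(r-1)}_{j-1}$ is a non-swap item. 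For every other index $k$ (either $k < j-1$ or $k > j$), the adjacent pair in $\Pi^{(r)}$ coincides with an adjacent pair in $\Pi^{(r-1)}$, so the inductive hypothesis closes the argument.

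The only delicate point will be the upgrade from the non-strict relation $\not\succ$ to the strict relation $\succ$, which relies on the tie-breaking index $t$ being a bijection so that $\tilde{d}$ induces a strict total order on $\I$; once this is noted, the remainder is pure bookkeeping of the splicing operation.
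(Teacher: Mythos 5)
Your proof is correct. It differs in form from the paper's: the paper argues directly about the final permutation, fixing an adjacent pair $(\Pi_k,\Pi_{k+1})$ and doing a two-case analysis on which of the two items was inserted later (if $\Pi_{k+1}$ came later it must be a non-swap item placed so that everything to its left, including $\Pi_k$, has higher $\tilde d$; if $\Pi_k$ came later, the while loop placed it directly in front of $\Pi_{k+1}$ precisely because $\tilde d(\Pi_{k+1})\not\succ\tilde d(\Pi_k)$). You instead carry an explicit invariant through all iterations by induction on $r$. Both arguments ultimately rest on the same two facts — the while loop's stopping condition and the preservation of relative order under splicing — but your version makes explicit what the paper leaves implicit, namely that old adjacent pairs survive an insertion unchanged and that only the two newly created pairs need checking. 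Your handling of the ``delicate point'' is also right and worth the emphasis you give it: the loop's exit condition only yields $\tilde d(\Pi^{(r-1)}_j)\not\succ\tilde d(L_r)$, and upgrading this to the strict relation $\tilde d(L_r)\succ\tilde d(\Pi^{(r-1)}_j)$ genuinely requires that the tie-breaking index makes $\succ$ a strict total order; the paper uses this silently. The cost of your route is a slightly heavier setup (a strengthened statement over all intermediate permutations); the benefit is that every step is pure bookkeeping with no appeal to ``relative order is preserved'' as an unproved background fact.
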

\begin{proof}
For $j\in\{k,k+1\}$, let $r(j)\in\{1,\dots,n\}$ be the index of
the iteration in which $\textsc{Universal}$ inserts $\Pi_{j}$ into
$\Pi$. We distinguish two cases.

If $r(k)<r(k+1)$, then the item $\Pi_{k+1}$ cannot be a swap item,
since it would appear in front of the item $\Pi_{k}$ if it was. As
each non-swap item is inserted into $\Pi$ such that all items left
of it are larger with respect to $\tilde{d}$, the claim follows.

If $r(k)>r(k+1)$, since it is not a swap item, $\Pi_{k}$ is put
in front of $\Pi_{k+1}$ because it has a higher density.  
\end{proof}
We prove that no item preceding a swap item has smaller size.
\begin{lemma}
For a permutation $\Pi$ returned by $\textsc{Universal}$, we have
$l(\Pi_{j})\geq l(\Pi_{k})$ for every swap item $\Pi_{k},1<k\leq n$,
and every other item $\Pi_{j},1\leq j<k$. \label{lem:sizes} \end{lemma}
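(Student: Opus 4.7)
The plan is to exploit two structural features of \textsc{Universal}: items are processed in non-decreasing order of size (line~1), and a swap item is always prepended to the current permutation when inserted. Letting $r(i)\in\{1,\dots,n\}$ denote the iteration at which $\Pi_i$ is inserted, the sorting step immediately gives $l(\Pi_a)\le l(\Pi_b)$ whenever $r(a)<r(b)$. So it suffices to show that, for every index $j<k$, the item $\Pi_j$ was inserted strictly after the swap item $\Pi_k$, i.e.\ that $r(j)>r(k)$.

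For this, I would first record the simple invariant that \textsc{Universal} is \emph{insertion-only}: every iteration appends or splices a single new element into the existing sequence, so once an item sits in the permutation its relative order with respect to every other already-inserted item is never changed. Assuming toward a contradiction that $r(j)<r(k)$, the item $\Pi_j$ is present in $\Pi^{(r(k)-1)}$, and since $\Pi_k$ is a swap item the algorithm sets $\Pi^{(r(k))}=(\Pi_k,\Pi^{(r(k)-1)})$, placing $\Pi_j$ strictly to the right of $\Pi_k$. By the invariant this relative order is preserved by every subsequent iteration, contradicting $j<k$ in the final permutation.

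Combining $r(j)>r(k)$ with the size-monotonicity of the processing order yields $l(\Pi_j)\ge l(\Pi_k)$, which is the claim. The argument has no real combinatorial depth; the only point that requires a touch of care is the insertion-only invariant, so I would state and justify it explicitly (as a one-line observation about the three places where $\Pi^{(r)}$ is built from $\Pi^{(r-1)}$) before invoking it in the contradiction.
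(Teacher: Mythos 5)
Your proof is correct and follows the same route as the paper: a swap item is prepended to the current permutation, so everything preceding it must have been inserted later, and the non-decreasing size order of insertion then gives the claim. The paper states the order-preservation fact implicitly where you make it an explicit invariant, but the argument is identical.
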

\begin{proof}
Since $\Pi_{k}$ is a swap item, it stands in front of all items inserted
earlier into $\Pi$. Hence, all items that appear in front of $\Pi_{k}$
in $\Pi$ have been inserted in a later iteration of $\textsc{Universal}$.
Since $\textsc{Universal}$ processes items in order of non-decreasing
sizes, we have $l(\Pi_{j})\geq l(\Pi_{k})$.  
\end{proof}
We prove that no item preceding a swap item has smaller value.
\begin{lemma}
For a permutation $\Pi$ returned by $\textsc{Universal}$, we have
$v(\Pi_{j})\geq v(\Pi_{k})$ for every swap item $\Pi_{k},1<k\leq n$,
and every other item $\Pi_{j},1\leq j<k$. \label{lem:values} \end{lemma}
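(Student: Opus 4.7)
The plan is to split the analysis by comparing $\tilde d(\Pi_j)$ and $\tilde d(\Pi_k)$. If $\tilde d(\Pi_j)\succeq \tilde d(\Pi_k)$, so that $d(\Pi_j)\geq d(\Pi_k)$, then combining with $l(\Pi_j)\geq l(\Pi_k)$ from Lemma~\ref{lem:sizes} immediately gives $v(\Pi_j)=d(\Pi_j)\,l(\Pi_j)\geq d(\Pi_k)\,l(\Pi_k)=v(\Pi_k)$, so this case is trivial.

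The real work lies in the case $\tilde d(\Pi_j)\prec\tilde d(\Pi_k)$, where I will first argue that $\Pi_j$ must itself be a swap item. Let $r(i)$ denote the iteration of $\textsc{Universal}$ in which item $i$ is inserted. Since $\Pi_k$ is a swap item, the moment it is inserted it is placed at the head of the current permutation; subsequent insertions can never push it past an item that was already present. In particular, if $r(\Pi_j)<r(\Pi_k)$, then $\Pi_k$ would end up strictly to the left of $\Pi_j$, contradicting $j<k$. Thus $r(\Pi_j)>r(\Pi_k)$, and $\Pi_k$ is already in the permutation when $\Pi_j$ is inserted. If $\Pi_j$ were not a swap item, the while loop of $\textsc{Universal}$ would advance past every item of strictly larger $\tilde d$, including $\Pi_k$, so $\Pi_j$ would land to the right of $\Pi_k$, again contradicting $j<k$. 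Hence $\Pi_j$ is a swap item.

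With $\Pi_j$ identified as a swap item, let $C_j$ be a corresponding swap capacity and $P_j$ the greedy prefix produced by $\textsc{MGreedy}(\mathcal I, C_j)$; by definition, $l(\Pi_j)\leq C_j$ and $v(\Pi_j)>v(P_j)$. Since $l(\Pi_k)\leq l(\Pi_j)\leq C_j$ by Lemma~\ref{lem:sizes}, the item $\Pi_k$ appears in the density-sorted list $D$ that $\textsc{MGreedy}$ builds for capacity $C_j$, and since $\tilde d(\Pi_k)\succ\tilde d(\Pi_j)$, it appears strictly before $\Pi_j$ in $D$. Because $\Pi_j$ is the first item that $\textsc{MGreedy}$ fails to pack into $C_j$, every item preceding $\Pi_j$ in $D$, in particular $\Pi_k$, belongs to $P_j$. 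Consequently $v(P_j)\geq v(\Pi_k)$, and chaining the inequalities yields $v(\Pi_j)>v(P_j)\geq v(\Pi_k)$.

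The hard part is exactly the density-decreasing case $\tilde d(\Pi_j)\prec\tilde d(\Pi_k)$: size comparison alone is insufficient, and the proof must simultaneously use the insertion rule of $\textsc{Universal}$ (to conclude that $\Pi_j$ is itself a swap item) and the defining inequality of a swap item at capacity $C_j$ (to relate $v(\Pi_k)$ to $v(\Pi_j)$ through the greedy prefix $P_j$). The remaining two places where one needs care are the contradiction argument involving $r(\Pi_j)<r(\Pi_k)$, which relies on the invariance that a swap item can never be overtaken from the right by a later insertion, and the verification that $\Pi_k$ is not filtered out of $D$, which uses exactly $l(\Pi_k)\leq l(\Pi_j)\leq C_j$.
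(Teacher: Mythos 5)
Your first case ($\tilde d(\Pi_j)\succeq\tilde d(\Pi_k)$) is fine --- Lemma~\ref{lem:sizes} applies to arbitrary $\Pi_j$ there --- and your treatment of a \emph{swap} item $\Pi_j$ with $\tilde d(\Pi_j)\prec\tilde d(\Pi_k)$ is exactly the paper's argument. The gap is the claim that every $\Pi_j$ with $j<k$ and $\tilde d(\Pi_j)\prec\tilde d(\Pi_k)$ must itself be a swap item. Your justification is that a non-swap $L_r=\Pi_j$ ``would advance past every item of strictly larger $\tilde d$, including $\Pi_k$.'' That is not what the while loop does: it scans from the front of the current permutation and stops at the \emph{first} position whose item does not have larger $\tilde d$ than $L_r$, inserting $L_r$ there. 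So it never reaches $\Pi_k$ if some item of \emph{smaller} $\tilde d$ sits in front of $\Pi_k$ at that moment --- and such an item can exist, namely a swap item $S$ with $\tilde d(S)\prec\tilde d(L_r)\prec\tilde d(\Pi_k)$ that was inserted after $\Pi_k$ and therefore prepended to the whole permutation. The scan then stops at $S$ and the non-swap item $L_r$ lands in front of $S$, hence in front of $\Pi_k$.

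This really happens. Take items $(l,v)$ equal to $y=(8,96)$, $K=(10,100)$, $S=(20,197)$, $z=(21,208.95)$, $J=(22,217.8)$. Then $K$, $S$, $z$ are swap items (for capacities $10$, $20$, $21$ respectively), while $y$ and $J$ are not ($J$ is only a candidate swap item for capacities in $[39,61)$, where the greedy set $\{y,K,z\}$ is worth $404.95>v(J)$). Running \textsc{Universal} gives $\Pi=(z,J,S,K,y)$: the non-swap item $J=\Pi_2$ precedes the swap item $K=\Pi_4$ and has strictly smaller density ($9.9$ versus $10$). So your case analysis misses precisely the paper's third case, which is where the remaining idea lives: for a non-swap $\Pi_j$ one passes to the first swap item $\Pi_{j'}$ after it, uses Lemma~\ref{lem:densities} and Lemma~\ref{lem:sizes} to get $d(\Pi_j)\geq d(\Pi_{j'})$ and $l(\Pi_j)\geq l(\Pi_{j'})$, hence $v(\Pi_j)\geq v(\Pi_{j'})$, and then applies your swap-item argument to $\Pi_{j'}$ versus $\Pi_k$. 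With that case added, your proof is complete.
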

\begin{proof}
We distinguish three cases.

\emph{First case:} $\Pi_{j}$ is a swap item and $d(\Pi_{j})\geq d(\Pi_{k})$.
By Lemma~\ref{lem:sizes}, we have $l(\Pi_{j})\geq l(\Pi_{k})$,
and the claim trivially holds.

\emph{Second case:} $\Pi_{j}$ is a swap item and $d(\Pi_{j})<d(\Pi_{k})$.
Since $\Pi_{j}$ is a swap item, there is a capacity $C\geq l(\Pi_{j})$
such that
\[
v(\Pi_{j})>v(\{i\in\I\mid l(i)\leq C\innerand\tilde{d}(i)\succ\tilde{d}(\Pi_{j})\}).
\]
In particular, for $C=l(\Pi_{j})$ we obtain
\begin{equation}
v(\Pi_{j})>v(\{i\in\mathcal{I}\mid l(i)\leq l(\Pi_{j})\innerand\tilde{d}(i)\succ\tilde{d}(\Pi_{j})\}).\label{eq:rhs}
\end{equation}

Since, by Lemma~\ref{lem:sizes}, $l(\Pi_{j})\geq l(\Pi_{k})$, the
item $\Pi_{k}$ is included in the set on the right hand side of~\eqref{eq:rhs}.
We conclude that $v(\Pi_{j})>v(\Pi_{k})$.

\emph{Third case:} $\Pi_{j}$ is not a swap item. Let $\Pi_{j'}$
be the first swap item after $\Pi_{j}$ in $\Pi$, i.e., 
\begin{align*}
j'=\min\{i\in\{j+1,\dots,k\}\mid\Pi_{i}\text{ is a swap item }\}.
\end{align*}
Note that the minimum is attained as $\Pi_{k}$ is a swap item. The
analysis of the first two cases implies that $v(\Pi_{j'})\geq v(\Pi_{k})$.
By Lemma~\ref{lem:densities} we have $d(\Pi_{j})\geq d(\Pi_{j+1})\geq\dots\geq d(\Pi_{j'})$,
and by Lemma~\ref{lem:sizes} we have $l(\Pi_{j})\geq l(\Pi_{j'})$.
Hence, $v(\Pi_{j})\geq v(\Pi_{j'})\geq v(\Pi_{k})$.  
\end{proof}
Finally, the next lemma gives a legitimation for the violation of
the density order in the output permutation. Essentially, whenever
an item precedes denser items, we guarantee that it is worth at least
as much as all of them combined.
\begin{lemma}
For a permutation $\Pi$ returned by $\textsc{Universal}$, we have
\[
v(\Pi_{k})\geq v\bigl(\bigl\{\Pi_{j}\mid j>k\innerand\tilde{d}(\Pi_{j})\succ\tilde{d}(\Pi_{k})\bigr\}\bigr)
\]
for every item $\Pi_{k},1\leq k<n$. \label{lem:order_violation_legitimation} \end{lemma}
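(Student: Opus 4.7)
The plan is to distinguish two cases according to whether $\Pi_{k}$ is a swap item, building on the following key observation: whenever $q>p$ and $\tilde{d}(\Pi_{q})\succ\tilde{d}(\Pi_{p})$, the item $\Pi_{q}$ must have been inserted by $\textsc{Universal}$ strictly before $\Pi_{p}$, and hence $l(\Pi_{q})\leq l(\Pi_{p})$. Indeed, had $\Pi_{q}$ been inserted later, then either as a swap item it would be placed at the very front, or as a non-swap item the while loop would stop at or before the current position of $\Pi_{p}$ (since $\tilde{d}(\Pi_{p})\not\succ\tilde{d}(\Pi_{q})$); in either case $\Pi_{q}$ would end up in front of $\Pi_{p}$, contradicting $q>p$. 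The size inequality then follows because $\textsc{Universal}$ processes items in non-decreasing order of size.

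If $\Pi_{k}$ is itself a swap item, I would argue---exactly as in the second case of the proof of Lemma~\ref{lem:values}---that
\begin{equation*}
v(\Pi_{k})>v\bigl(\bigl\{i\in\mathcal{I}\mid l(i)\leq l(\Pi_{k})\innerand\tilde{d}(i)\succ\tilde{d}(\Pi_{k})\bigr\}\bigr),
\end{equation*}
and the key observation identifies the set $S:=\{\Pi_{j}\mid j>k\innerand\tilde{d}(\Pi_{j})\succ\tilde{d}(\Pi_{k})\}$ as a subset of the right-hand side, which finishes the case. If $\Pi_{k}$ is non-swap and $S\neq\emptyset$, I would let $j_{1}:=\min\{j\mid\Pi_{j}\in S\}$ and set $m:=j_{1}-1$. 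Minimality of $j_{1}$ gives $\tilde{d}(\Pi_{m})\prec\tilde{d}(\Pi_{k})$; the case $m=k$ is excluded because Lemma~\ref{lem:densities} applied to the non-swap $\Pi_{k}$ would force $\tilde{d}(\Pi_{k+1})\prec\tilde{d}(\Pi_{k})$, contradicting $\Pi_{j_{1}}=\Pi_{k+1}\in S$; and the density strictly rising from $\Pi_{m}$ to $\Pi_{m+1}=\Pi_{j_{1}}$ forces $\Pi_{m}$ to be a swap item by the contrapositive of Lemma~\ref{lem:densities}. Applying the key observation with $\Pi_{m}$ in place of $\Pi_{k}$ shows $S\subseteq\{i\in\mathcal{I}\mid l(i)\leq l(\Pi_{m})\innerand\tilde{d}(i)\succ\tilde{d}(\Pi_{m})\}$, so the swap-item property of $\Pi_{m}$ yields $v(\Pi_{m})>v(S)$, and Lemma~\ref{lem:values} applied to the swap item $\Pi_{m}$ at position $m>k$ closes the chain with $v(\Pi_{k})\geq v(\Pi_{m})>v(S)$.

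The main obstacle is the non-swap case: no direct property of $\Pi_{k}$ bounds the value of higher-density items trailing it, so the argument has to pivot through an intermediate swap item. Choosing $m$ as the predecessor of the first violator is what makes the pieces fit together simultaneously---$\Pi_{m}$ is guaranteed to be swap (via Lemma~\ref{lem:densities}), all of $S$ lies to its right with density above $\tilde{d}(\Pi_{m})$ (so the key observation controls sizes), and $\Pi_{k}$ precedes the swap item $\Pi_{m}$ (so Lemma~\ref{lem:values} links $v(\Pi_{k})$ to $v(\Pi_{m})$).
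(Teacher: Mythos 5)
Your proof is correct and takes essentially the same route as the paper: the swap case uses the swap definition plus a size-based subset argument (your ``key observation'' is exactly the inclusion argument the paper makes in that case), and the non-swap case reduces to a downstream swap item via Lemma~\ref{lem:values}. The only difference is cosmetic---you pivot through $\Pi_{m}$, the predecessor of the first density violation, whereas the paper pivots through $\Pi_{k'}$, the first swap item after $\Pi_{k}$; both are valid anchors and the rest of the argument is the same.
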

\begin{proof}
We distinguish whether $\Pi_{k}$ is a swap item, or not.

If $\Pi_{k}$ is a swap item, by definition, $\Pi_{k}$ is worth more
than the greedy set for some capacity $C\geq l(\Pi_{k})$. Thus,
\[
v(\Pi_{k})>v\bigl(\bigl\{\Pi_{j}\mid l(\Pi_{j})\leq C\innerand\tilde{d}(\Pi_{j})\succ\tilde{d}(\Pi_{k})\bigr\}\bigr)\geq
v\bigl(\bigl\{\Pi_{j}\mid l(\Pi_{j})\leq l(\Pi_{k})\innerand\tilde{d}(\Pi_{j})\succ\tilde{d}(\Pi_{k})\bigr\}\bigr).
\]
 Since items whose size is strictly larger than $l(\Pi_{k})$ are
inserted into $\Pi$ at a later iteration of $\textsc{Universal}$,
they can only end up behind $\Pi_{k}$ if they are smaller with respect
to $\tilde{d}$. Hence, 
\begin{align*}
 & \{\Pi_{j}\,|\, j>k\innerand\tilde{d}(\Pi_{j})\succ\tilde{d}(\Pi_{k})\}\subseteq\{\Pi_{j}\mid l(\Pi_{j})\leq
l(\Pi_{k})\innerand\tilde{d}(\Pi_{j})\succ\tilde{d}(\Pi_{k})\},
\end{align*}
and thus $v(\Pi_{k})>v(\{\Pi_{j}\,|\, j>k\innerand\tilde{d}(\Pi_{j})\succ\tilde{d}(\Pi_{k})\})$,
as claimed.

If, on the other hand, $\Pi_{k}$ is not a swap item, let $\Pi_{k'}$
be the first swap item after it in $\Pi$. If no such item exists,
the claim holds by Lemma~\ref{lem:densities}, since 
\begin{align*}
\bigl\{\Pi_{j}\mid j>k\innerand\tilde{d}(\Pi_{j})\succ\tilde{d}(\Pi_{k})\bigr\}=\emptyset.
\end{align*}
Otherwise, by Lemma~\ref{lem:densities}, we obtain
$\tilde{d}(\Pi_{k})\succ\tilde{d}(\Pi_{k+1})\succ\dots\succ\tilde{d}(\Pi_{k'})$
and hence 
\begin{align*}
\{\Pi_{j}\mid j>k\innerand\tilde{d}(\Pi_{j})\succ\tilde{d}(\Pi_{k})\} & =\{\Pi_{j}\mid
j>k'\innerand\tilde{d}(\Pi_{j})\succ\tilde{d}(\Pi_{k})\}\\
 & \subseteq\{\Pi_{j}\mid j>k'\innerand\tilde{d}(\Pi_{j})\succ\tilde{d}(\Pi_{k'})\}.
\end{align*}
Consequently, and by the argument above for swap items, 
\begin{align*}
v(\Pi_{k'}) & >v(\{\Pi_{j}\mid j>k'\innerand\tilde{d}(\Pi_{j})\succ\tilde{d}(\Pi_{k'})\})\\
 & \geq v(\{\Pi_{j}\mid j>k\innerand\tilde{d}(\Pi_{j})>\tilde{d}(\Pi_{k})\})).
\end{align*}
Finally, by Lemma~\ref{lem:values}, we have $v(\Pi_{k})\geq v(\Pi_{k'})\geq v(\{\Pi_{j}\,|\,
j>k\innerand\tilde{d}(\Pi_{j})\succ\tilde{d}(\Pi_{k})\})$.
\end{proof}

We now prove Theorem~\ref{thm:2-competitive}.
\begin{proof}
[of Theorem \ref{thm:2-competitive}] We show that for every set of
items $\mathcal{I}$, the permutation $\Pi=\textsc{Universal}(\mathcal{I})$
satisfies $v(\textsc{Opt}(\mathcal{I},C))\leq2v(\Pi(C))$ for every
capacity $C\leq l(\mathcal{I})$. By Theorem~\ref{thm:greedy}, it
suffices to show $v(\Pi(C))\geq v(\textsc{MGreedy}(\mathcal{I},C))$
for all capacities. We distinguish between swap capacities and capacities
where $\textsc{MGreedy}$ outputs a greedy set.

First, assume that $C$ is a swap capacity, and let $\{\Pi_{k}\}=\textsc{MGreedy}(\mathcal{I},C)$
be the swap item returned by the modified greedy algorithm. Then,
$\Pi(C)$ contains at least one item $\Pi_{j}$ with $j\leq k$. By
Lemma~\ref{lem:values} we have 
\begin{align*}
v(\Pi(C))\geq v(\Pi_{j})\geq v(\Pi_{k})=v(\textsc{MGreedy}(\mathcal{I},C)).
\end{align*}

Now assume that $C$ is not a swap capacity. Let $G^{+}=\textsc{MGreedy}(\mathcal{I},C)\setminus\Pi(C)$
be the set of items in the greedy set for capacity $C$ that are not
packed by the permutation $\Pi$. Similarly, let $U^{+}=\Pi(C)\setminus\textsc{MGreedy}(\mathcal{I},C)$.
If $G^{+}=\emptyset$, then $v(\Pi(C))\geq v(\textsc{MGreedy}(\mathcal{I},C))$
and we are done. Suppose now that $G^{+}\neq\emptyset$. Then, also
$U^{+}\neq\emptyset$. For all items $i\in U^{+}$, we have $l(i)\leq C$
and $i\notin\textsc{MGreedy}(\mathcal{I},C)$. Since $C$ is not a
swap capacity, $\textsc{MGreedy}(\mathcal{I},C)$ is the greedy set
for capacity $C$, and thus $\tilde{d}(i)\prec\tilde{d}(i')$ for
all $i\in U^{+}$ and $i'\in G^{+}$. By definition of $\Pi(C)$ and
since $U^{+}\neq\emptyset$, we also have $k=\min\{j\mid\Pi_{j}\in U^{+}\}<\min\{k'\mid\Pi_{k'}\in G^{+}\}$,
i.e., the first item $\Pi_{k}\in U^{+}$ in~$\Pi$ is encountered
before every item from $G^{+}$. It follows that 
\begin{align*}
G^{+}\subseteq\bigl\{\Pi_{j}\mid j>k\innerand\tilde{d}(\Pi_{j})\succ\tilde{d}(\Pi_{k})\bigr)\bigr\}.
\end{align*}
 Using $v(U^{+})\geq v(\Pi_{k})$ and $v(\Pi_{k})\geq v(G^{+})$ (Lemma~\ref{lem:order_violation_legitimation})
we get 
\begin{align*}
v(\Pi(C)) & =v\bigl(\Pi(C)\cap\textsc{MGreedy}(\mathcal{I},C)\bigr)+v(U^{+})\\
 & \geq v\bigl(\Pi(C)\cap\textsc{MGreedy}(\mathcal{I},C)\bigr)+v(G^{+})=v(\textsc{MGreedy}(\mathcal{I},C)).
\end{align*}
\end{proof}
While it is obvious that \noun{Universal} runs in polynomial time,
we show that it can be modified to run in time $\Theta(n\log n)$. 
\begin{theorem}
The algorithm \noun{Universal} can be implemented to run in time $\Theta(n\log n)$.\label{thm:running time
Universal}\end{theorem}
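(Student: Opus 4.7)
My plan is to decompose the algorithm into three $O(n\log n)$-time subtasks and then give a matching $\Omega(n\log n)$ lower bound. The preprocessing step sorts the items both by density (to compute $\tilde d$ together with the greedy prefix sums of sizes and values) and by non-decreasing size (to drive the outer loop $L$); each sort costs $\Theta(n\log n)$.

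Next, I would identify the set of swap items in $O(n\log n)$. Processing items in density order $D_1,D_2,\dots,D_n$ and maintaining the previously seen items in a balanced binary search tree keyed by size, augmented at every subtree with the cumulative size and cumulative value of that subtree, the candidate prefix $P(j,C)=\{D_i:i<j,\,l(D_i)\leq C\}$ admits $O(\log n)$ queries of both $l(P(j,C))$ and $v(P(j,C))$. Since $v(P(j,C))$ is monotonically non-decreasing in $C$, the value condition $v(D_j)>v(P(j,C))$ is easiest to satisfy at the smallest capacity $C$ satisfying the feasibility constraints $l(D_j)\leq C$, $l(P(j,C))\leq C<l(P(j,C))+l(D_j)$; that smallest $C$ can be located via $O(\log n)$ predecessor/successor queries on the BST, for $O(n\log n)$ in total.

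The main obstacle is the insertion loop, for which a naive list representation would cost $\Theta(n^{2})$. I would maintain $\Pi$ as a balanced binary search tree ordered implicitly by position (an order-statistic tree) and augmented at every node with the minimum $\tilde d$-value of its subtree. Prepending a swap item is the standard insert-at-position-$1$ operation in $O(\log n)$. For a non-swap item $L_r$, the target position is the leftmost index $j$ with $\tilde d(\Pi_j)\prec\tilde d(L_r)$; a top-down descent that prefers the left subtree whenever its augmented minimum lies strictly below $\tilde d(L_r)$, otherwise tests the current node, otherwise descends right, returns this index in $O(\log n)$. After the $n$ insertions, an in-order traversal reports $\Pi$ in $O(n)$ time, so the loop costs $O(n\log n)$ overall.

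For the matching $\Omega(n\log n)$ lower bound, I would exhibit an instance family on which the output of \textsc{Universal} encodes a sorted sequence. Take $n$ items of unit value with pairwise distinct sizes. Any swap item $D_j$ would require $1=v(D_j)>v(P(j,C))=|P(j,C)|$ for some admissible capacity, forcing $P(j,C)=\emptyset$; the remaining constraints then reduce to $l(D_j)\leq C<l(D_j)$, which has no solution, so no item is a swap item. By Lemma~\ref{lem:densities}, $\Pi$ then lists the items in strictly decreasing $\tilde d$-order, and producing this output sorts $n$ distinct densities, which requires $\Omega(n\log n)$ comparisons, matching the upper bound.
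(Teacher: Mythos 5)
Your decomposition is sound and two of the three steps are fine, but the swap-item detection contains a genuine gap. Reducing the test to the \emph{smallest} capacity $C$ satisfying $l(D_j)\leq C$ and $l(P(j,C))\leq C<l(P(j,C))+l(D_j)$ is legitimate (monotonicity of $v(P(j,C))$ in $C$ does the work), but the claim that this smallest $C$ ``can be located via $O(\log n)$ predecessor/successor queries'' is unsupported. The feasible set is governed by the interplay between the identity and the step function $C\mapsto l(P(j,C))$: it is a union of intervals, and whether a given step contributes a nonempty interval is not monotone in the step index (feasibility switches off whenever the cumulative size overtakes the next jump point and can switch back on later), so neither a predecessor/successor query nor a plain binary search on the size-keyed tree finds the minimum. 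The paper sidesteps this entirely by first proving the characterization that an item is a swap item iff its value exceeds the total value of all \emph{smaller} items of higher density, which turns detection into a single aggregate query. Your step is repairable in the same spirit: if the higher-density items of size at most $l(D_j)$ have total size exceeding $l(D_j)$, then, since each has density at least $d(D_j)$, their total value already exceeds $v(D_j)$ and $D_j$ is a swap item for no capacity; otherwise $C=l(D_j)$ is itself the smallest feasible capacity (up to the $P=\emptyset$ edge case), and your single query at that capacity suffices. As written, though, the crucial $O(\log n)$ claim for this phase is not established.

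The remaining steps are correct and genuinely different from the paper's. For the insertion phase the paper maintains a list of density-ordered search trees, one per swap-item block, together with an amortized sublist used to locate the right tree; you instead keep one order-statistic tree keyed by position, augmented with the subtree minimum of $\tilde d$, and locate the leftmost position $j$ with $\tilde d(\Pi_j)\prec\tilde d(L_r)$ by a standard top-down descent. This is simpler than the paper's construction and correct, since the while-loop in the algorithm stops exactly at that position. Your $\Omega(n\log n)$ lower bound (unit-value items with distinct sizes admit no swap items, so by Lemma~\ref{lem:densities} the output is the size-sorted order) is a sensible comparison-model argument that the paper does not actually supply for its ``$\Theta$''.
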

\begin{proof}
We first argue how all swap items can be determined in time $\Theta(n\log n)$.
We use that an item is a swap item if and only if it is worth more
than all smaller items of higher density combined. This is true, because
every item $i$ that is worth more than all smaller items of higher
density is a swap item for capacity $l(i)$. Conversely, a swap item
$i$ for capacity $C\geq l(i)$ is worth more than all items of higher
density that are smaller than $C$.

We maintain a balanced search tree for items that is ordered by size
and stores the total value of the items of both subtrees in the corresponding
root. Inserting an item into this tree as well as determining whether
an item is worth more than all smaller items in the tree both takes
time $\Theta(\log n)$. To determine the set of swap items, we iterate
over all items in order of decreasing densities and insert items one
by one into the search tree. After each insertion, we query whether
the newly inserted item is worth more than all smaller items in the
tree. This is true if and only if the item is worth more than all
smaller items of higher density, i.e., if and only if the item is
a swap item. Including the initial sorting by density, we can determine
all swap items in time $\Theta(n\log n)$.

We construct the output permutation $\Pi$ by iterating over the items
in order of increasing size, as in Algorithm~\ref{our_algo}. We
maintain a list $L$ of balanced search trees, each ordered by density.
Except for the last tree in $L$, every tree contains exactly one
swap item, which is the item of smallest density in the tree. The
density of a tree is the density of this swap item (or 0 if the tree
has no swap item). Each tree stores the items in $\Pi$ to the left
of the corresponding swap item (if it exists) and to the right of
the swap item of the preceding tree in $L$ (if it exists). We start
with a list containing a single tree with no corresponding swap item,
which eventually holds all non-swap items that end up behind the last
swap item in $\Pi$. Whenever we encounter a new swap item, we add
a new tree consisting of only this swap item to the front of $L$.
For each non-swap item, we have to find the correct tree to insert
it into. Once we know the tree, we can determine the position at which
to insert the item into the tree, and thus in $\Pi$, in time $\Theta(\log n)$
simply by searching the tree. 

To complete the proof, we need an efficient way to find the correct
tree in $L$ for a non-swap item. For this purpose, we maintain a
sublist $L'$ of $L$ that contains only those trees that are needed
for the remainder of the algorithm. Whenever a new swap item $s$
adds a tree to the front of $L$, we also add the tree to the front
of~$L'$. Observe that from this point on no items are inserted into
trees of a higher density than $s$. Hence, before inserting the tree
of $s$ to $L'$, we may remove trees of higher density from the front
of~$L'$. This guarantees that $L'$ remains sorted by density. We
can thus implement $L'$ as a balanced search tree order by density.
This way, we can find the correct tree for each non-swap item in time
$\Theta(\log n)$. Since every tree is removed at most once from $L'$,
the amortized cost for maintaining the sublist is constant for each
swap item.

Since $\textsc{Universal}$ requires $n$ iterations, the total running
time is $\Theta(n\log n)$.  
\end{proof}
We now give a general lower bound on the robustness factor of any
policy for the oblivious knapsack problem. This shows that $\textsc{Universal}$
is best possible.
\begin{theorem}
For every $\delta>0$, there are instances of the oblivious knapsack
problem where no policy achieves a robustness factor of $2-\delta$.\end{theorem}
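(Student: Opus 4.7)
The plan is to exhibit, for each $\delta > 0$, a small instance on which no policy (adaptive or not) attains robustness factor below $2 - \delta$. The natural approach is to construct an instance with two critical capacities $C_1 < C_2$ whose optimal packings are structurally incompatible, forcing any policy into a wrong commitment at one of them.

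Concretely, I would begin by analyzing the three-item instance consisting of $a_1 = (1, 1)$, $a_2 = (1, 1)$, and $b = (2 - \epsilon, v_b)$ for a small parameter $\epsilon > 0$ and a carefully chosen value $v_b$. At capacity $C_1 = 2 - \epsilon$, the feasible packings are a single $a_i$ (value $1$) or $b$ alone (value $v_b$), since $a_1 + a_2$ and $a_i + b$ both exceed $C_1$. At capacity $C_2 = 2$, the pair $\{a_1, a_2\}$ is feasible with value $2$, while $b$ alone yields only $v_b$ and its placement forbids any subsequent $a_i$. A universal policy that places $a_i$ before $b$ loses a factor $v_b$ at $C_1$, and a policy that places $b$ before the $a_i$'s loses a factor $2/v_b$ at $C_2$. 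Tuning $v_b$ yields a balanced universal lower bound.

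To lift this to arbitrary $\delta$ and to cover adaptive policies, I would extend the instance to a parameterized family. One natural choice is to replace the single heavy item $b$ by a family of heavy items $b_1, \ldots, b_n$ of sizes $2 - \epsilon_j$ and values $v_{b_j}$ that are tuned so that each $b_j$ is the unique optimal single item at its own critical capacity $C_j = 2 - \epsilon_j$, while at the common capacity $2$ the pair $\{a_1, a_2\}$ remains the unique way to reach value $2$. The argument then reduces to the following dichotomy for any policy: at each $C_j$ the policy's first packed item is either some $a_i$ (yielding ratio $\to 2$) or some $b_i$ with $i \neq j$ whose size or value causes a ratio $\geq 2 - \delta$, as the "correct" $b_j$ depends on the unknown $C_j$.

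The main obstacle will be handling adaptive decision trees, which can branch on observations and hence are considerably more flexible than fixed permutations. The key step is to argue, via an adversarial selection of capacity, that regardless of how the tree branches at each node, at least one root-to-leaf path can be forced to pack only one item of value close to $1$ while the optimum is close to $2$. I expect the heart of the argument to be a counting/pigeonhole-style step showing that the number of distinct "first packed items" produced by the decision tree across all capacities in the family is too small to simultaneously match every adversarial choice $b_j$; the adversary then selects an index $j$ whose corresponding critical capacity exposes the policy's mismatch.
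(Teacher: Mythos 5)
Your high-level idea---design an instance with a family of ``critical capacities,'' each demanding a different item first, and then argue that a decision tree cannot commit correctly for all of them---is the right intuition, and it is indeed what the paper's proof accomplishes. However, the construction you sketch does not realize this intuition, and the gap you yourself flag (handling adaptivity) is not a technicality but a place where the argument breaks.

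Concretely: with two unit items $a_1,a_2$ and heavy items $b_j$ of size $2-\epsilon_j$ and value $v_{b_j}$, a policy whose root is the \emph{largest} heavy item $b_n$ defeats the adversary. At capacity $C_j=2-\epsilon_j$ with $j<n$, the item $b_n$ does \emph{not} fit, so the decision tree learns this, moves to its right child, and can probe $b_{n-1},b_{n-2},\ldots$ until it finds and packs the correct $b_j$; it thereby achieves the optimum at every $C_j$. The only remaining threat is a capacity near $2$, where the ratio is $2/v_{b_n}$. No choice of the $v_{b_j}$'s repairs this: increasing values let the $b_n$-first policy win, decreasing values let the $b_1$-first policy win, and the three-item analysis shows that balancing two critical capacities caps the lower bound around $\sqrt{2}$. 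Your proposed pigeonhole step (``the number of distinct first packed items is too small'') is precisely what fails: because the heavy items have nested sizes, a single adaptive policy \emph{can} produce $n$ different first packed items across the $n$ critical capacities, one per capacity, all correct. So there is no capacity left for the adversary to exploit, short of $C=2$ itself.

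The paper's construction closes exactly this hole. It uses $n$ items with $l(i)=F_n+F_i-1$ (Fibonacci) and $v(i)=1+i/n$, and the point is that \emph{every} item $r$ has a dedicated capacity $C_r$ at which (i) $r$ fits, (ii) after packing $r$, \emph{no} remaining item fits (the residual capacity $F_n-1$ is below the smallest item size $F_n$), and (iii) the optimum packs a pair of total value $\approx 2\,v(r)$, thanks to the identity $l(i{-}1)+l(i{-}2)=C_i$. Because the root of \emph{any} decision tree is forced to pack at $C_r$ and is then stuck, the adversary simply reads off the root $r$ and plays $C_r$; the rest of the tree is irrelevant, and no counting argument is needed. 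The lesson is that the lower-bound instance must make every item a ``swap-like'' item with its own critical capacity that traps exactly that item, rather than having a designated family of heavy items that an adaptive policy can safely probe in decreasing size order.
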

\begin{proof}
We give a family of instances, one for each size $n\geq3$. We ensure
that for every item $i$ of the instance of size $n$, there is a
capacity $C$, such that packing item $i$ first can only lead to
a solution that is worse than $\textsc{Opt}(\mathcal{I},C)$ by a
factor of at least $(2-4/n)$. This completes the proof, as the factor
approaches $2$ for increasing values of $n$.

The instance of size $n$ is given by $\mathcal{I}=\{1,2,\dots,n\}$
with
\begin{align*}
l(i) & =F_{n}+F_{i}-1, & v(i) & =1+\frac{i}{n},
\end{align*}
 where $F_{i}$ denotes the $i$-th Fibonacci number ($F_{1}=1,F_{2}=1,F_{3}=2,\dots$).

We need to show that, no matter which item is tried first (i.e., no
matter which item is the root of the policy), there is a capacity
for which this choice ruins the solution. Observe that both values
and sizes of the items are strictly increasing. Assume that item $i\geq3$
is packed first. Since the smallest item has size $l(1)=F_{n}$, for
capacity $C_{i}=2F_{n}+F_{i}-2<2F_{n}+F_{i}-1=l(1)+l(i)$, no additional
item fits the knapsack. However, the unique optimum solution in this
case is $\textsc{Opt}(\mathcal{I},C_{i})=\{i-1,i-2\}$. These two
items fit the knapsack, as $l(i-1)+l(i-2)=2F_{n}+F_{i-1}+F_{i-2}-2=2F_{n}+F_{i}-2=C_{i}$.
By definition, 
\begin{align*}
\frac{v(i-1)+v(i-2)}{v_{i}}=\frac{2n+2i-3}{n+i}=2-\frac{3}{n+i}\geq2-\frac{3}{n}\,.
\end{align*}
Hence, policies that first pack item $i\geq3$ do not achieve a robustness
factor $\alpha < 2-3/n$.

Now, assume that one of the two smallest items is packed first. For
capacity $C_{1,2}=l(n)=2F_{n}-1<2F_{n}=l(1)+l(2)$, no additional
item fits the knapsack. The unique optimum solution, however, is to
pack item $n$. It remains to compute the ratios 
\[
\frac{v(n)}{v(1)}>\frac{v(n)}{v(2)}=\frac{2n}{n+2}=2-\frac{4}{n+2}>2-\frac{4}{n}\,.
\]
Hence, policies that first pack item 1 or item 2 do not achieve a
robustness factor $\alpha < 2-4/n$.  
\end{proof}

\section{Unit Densities\label{sec:Unit-Densities}}

In this section we restrict ourselves to instances of the oblivious
knapsack problem, where all items have unit density, i.e., $v(i)=l(i)$
for all items $i\in\I$. For two items $i,j\in\I$ we say that $i$
is smaller than $j$ and write $i\prec j$ if $v(i)<v(j)$, or $v(i)=v(j)$
and $t(i)<t(j)$, where $t$ is the tiebreaking index introduced in
Section~\ref{sec:Algorithm}. We give an algorithm $\textsc{UniversalUD}$~(cf.~Algorithm~\ref{our_algo_ud})
that produces a universal policy tailored to achieve the best possible
robustness factor equal to the golden ratio $\varphi\approx1.618$.
The algorithm considers the items from smallest to largest, and inserts
each item into the output sequence as far to the end as possible,
such that the item is not preceded by other items that are more than
a factor~$\varphi$ smaller. Intuitively, the algorithm tries as
much as possible to keep the resulting order sorted increasingly by
size; only when an item dominates another item by a factor of at least~$\varphi$
the algorithm ensures that it precedes this item in the final sequence.
Note that, even though $\varphi$ is irrational, for rationals $a,b$
the condition $a<\varphi b$ can be tested efficiently by testing
the equivalent condition $a/b<1+b/a$.

\begin{algorithm}[tb]      
  \DontPrintSemicolon    
  \KwIn{set of items $\I$}    
  \KwOut{sequence of items $\Pi$}   
  \caption{\textsc{UniversalUD}($\mathcal{I}$)\label{our_algo_ud}}        
  $L \leftarrow \left<\textrm{items in }\mathcal{I} \textrm{ sorted such that }L_1\prec\dots\prec L_n\right>$\;   
  $\Pi^{(0)} \leftarrow \emptyset$\;   
  \For{$r \leftarrow 1,\dots,n$}      
  {                
    $j \leftarrow 1$\;            
    \While{$j \leq |\Pi|$ {\bf and} $v(L_r) < \varphi v(\Pi^{(r-1)}_j)$}            
    { $j \leftarrow j+1$\; }            
    $\Pi^{(r)} \leftarrow (\Pi^{(r-1)}_1,\dots,\Pi^{(r-1)}_{j-1},L_r,\Pi^{(r-1)}_j,\dots)$\;        
  }   
  \Return $\Pi^{(n)}$\;  
\end{algorithm}
\begin{theorem}
The algorithm $\textsc{UniversalUD}$ constructs a universal policy
of robustness factor $\varphi$ when all items have unit density. \end{theorem}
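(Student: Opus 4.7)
The plan is to show $v(\Pi(C)) \ge v(\OPT(\mathcal{I},C))/\varphi$ for every capacity $C$ by a contradiction argument that combines a structural property of $\Pi$ with a value-range decomposition of the optimum, in a spirit similar to the proof of Theorem~\ref{thm:2-competitive} but comparing directly against $\OPT$ instead of the modified greedy.

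I would first establish the structural invariant that, for every pair of consecutive items $\Pi_k, \Pi_{k+1}$ in $\Pi$ with $v(\Pi_k) > v(\Pi_{k+1})$, we have $v(\Pi_k) \ge \varphi\, v(\Pi_{k+1})$. This follows by induction on the iteration counter $r$ of \textsc{UniversalUD}: inserting $L_r$ at position $j$ creates at most one new descent, namely the pair $(L_r, \Pi^{(r-1)}_j)$, and the while-loop exit condition certifies $v(L_r) \ge \varphi\, v(\Pi^{(r-1)}_j)$. The same induction also yields the insertion-rule invariant: for every $z \in \mathcal{I}$, every item of value in the interval $(v(z)/\varphi, v(z)]$ precedes $z$ in $\Pi$, whereas every item of value at most $v(z)/\varphi$ follows $z$ in $\Pi$.

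Next, fix $C$, set $P := \Pi(C)$ and $S^* := \OPT(\mathcal{I}, C)$, and assume for contradiction that $v(P) < v(S^*)/\varphi$. The central observation is that every item $z \notin P$ with $v(z) \le C$ was rejected because the set $A_z$ of items packed at the moment $\Pi$ considered $z$ satisfies $l(A_z) + v(z) > C$; hence
\[ v(P) \;\ge\; l(A_z) \;>\; C - v(z) \;\ge\; v(S^*) - v(z). \]
Combined with the contradiction hypothesis, this gives $v(z) > v(S^*)(1-1/\varphi) = v(S^*)/\varphi^2$, and in particular $v(y) > v(S^*)/\varphi^2$ for every $y \in S^* \setminus P$. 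Since three items each of value exceeding $v(S^*)/\varphi^2$ would sum to more than $v(S^*)$ (as $3 > \varphi^2$), we deduce $|S^* \setminus P| \le 2$. The case $|S^* \setminus P| = 0$ is immediate.

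To close the remaining subcases I would pick $y$ to be the smallest-in-$\prec$ item of $\mathcal{I} \setminus P$. By the insertion-rule invariant, the set $U_0$ of items of value in $(v(y)/\varphi, v(y)]$ all precede $y$ in $\Pi$, and because $y$ is the smallest missing, every element of $U_0$ is packed before $y$; likewise, every item of value at most $v(y)/\varphi$ follows $y$ in $\Pi$ and is packed. Writing $T$ for the total value of items of value at most $v(y)/\varphi$, I then decompose $v(P) = v(U_0) + v(V_0 \cap A_y) + T$, where $V_0 \cap A_y$ collects the items of value greater than $v(y)$ that $\Pi$ packs before $y$. Matching this decomposition against the corresponding value-range decomposition of $S^*$, and using $l(A_y) > C - v(y)$, the feasibility $l(S^*) \le C$, and the golden-ratio identity $1 - 1/\varphi^2 = 1/\varphi$, yields the desired contradiction. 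The main obstacle is this final accounting step: one must carefully balance the missing items $S^* \setminus P$ (each of value greater than $v(S^*)/\varphi^2$) against the extra packed items $P \setminus S^*$ (which are forced to satisfy $v(e) > C - v(S^*)$ by the optimality of $S^*$), and separately verify both subcases $|S^* \setminus P| = 1$ and $|S^* \setminus P| = 2$.
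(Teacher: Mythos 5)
Your proposal takes a genuinely different route from the paper's. The paper splits on whether the set $M=\{i\in\Pi(C)\mid\exists j\in\I\setminus\Pi(C):j\prec i\}$ is empty; when $M\neq\emptyset$ it exhibits a single unpacked item $j$ with $v(\Pi(C))\geq\varphi\,v(j)$ and concludes from $C<v(\Pi(C))+v(j)$, and when $M=\emptyset$ it matches $\Pi(C)\setminus\OPT$ against $\OPT\setminus\Pi(C)$ item by item. Your observation that the contradiction hypothesis forces $|S^*\setminus P|\leq 2$ (because each missing optimum item has value $>v(S^*)/\varphi^2$ and $3/\varphi^2>1$) is a nice idea and does not appear in the paper. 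However, the proposal as written has two genuine problems.

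First, the second half of your ``insertion-rule invariant'' is false. The algorithm's while loop only certifies that every item \emph{preceding} $z$ in $\Pi$ has value $>v(z)/\varphi$; it does \emph{not} certify that every smaller item of value $>v(z)/\varphi$ ends up in front of $z$. Concretely, take values $1,\,1.5,\,1.7,\,2.5$. The algorithm produces $\Pi=(1.7,\,2.5,\,1,\,1.5)$: when $2.5$ is inserted, the while loop stops at the item of value $1$ (since $2.5\geq\varphi\cdot1$), never looking at $1.5$, which stays behind. For $z=1.7$, the item $1.5$ has value in $(v(z)/\varphi,v(z)]\approx(1.05,1.7]$ and $1.5\prec 1.7$, yet $1.5$ follows $1.7$ in $\Pi$. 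So the claim that the set $U_0$ of all items in that value range precedes $y$ is unjustified, and the decomposition $v(P)=v(U_0)+v(V_0\cap A_y)+T$ you build on it does not hold.

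Second, and independently, the ``final accounting step'' for $|S^*\setminus P|\in\{1,2\}$ is not carried out, and it is not clear it can be closed from the global inequalities you have assembled. The inequalities $v(P)>C-v(y_i)$, $v(S^*)\leq C$, and $v(P)<v(S^*)/\varphi$ alone only reproduce $v(y_i)>v(S^*)/\varphi^2$, which you already used to get the cardinality bound; to derive an actual contradiction one needs structural information about which specific items are in $P\setminus S^*$ and how they compare to the missing $y_i$, which is exactly where the incorrect invariant was supposed to help. You explicitly flag this step as the main obstacle, so the proof is incomplete as submitted. The correct one-sided fact you do have (everything before $z$ in $\Pi$ has value $>v(z)/\varphi$), together with the observation that if the first missing-in-$\prec$ item $j$ lies after some packed item of larger $\prec$-rank then the item \emph{directly} preceding $j$ forces $v(\Pi(C))\geq\varphi\,v(j)$, is essentially how the paper resolves the nontrivial case; you may find it easier to pivot to that argument rather than trying to salvage the two-item accounting.
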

\begin{proof}
Given an instance $\I$ of the oblivious knapsack problem with unit
densities and any capacity $C\leq v(\I)$, we compare the packing
$\Pi(C)$ that results from the solution $\Pi=\textsc{UniversalUD}(\I$)
with an optimal packing $\OPT(\I,C)$. We define the set $M$ of items
in $\Pi(C)$ for which at least one smaller item is not in $\Pi(C)$,
i.e., more precisely, let $M=\{i\in\Pi(C)\mid\exists j\in\I\backslash\Pi(C):j\prec i\}$.

We first consider the case that $M\neq\emptyset$ and set $i=\min_{\prec}M$
to be the smallest item in $M$ with respect to `$\prec$'. Consider
the iteration $r$ of $ $$\textsc{UniversalUD}$ in which $i$ is
inserted into $\Pi$, i.e., $i=L_{r}$. By definition of $M$, there
is an item $j\prec i$ with $j\notin\Pi(C).$ Let $j$ be the first
such item in $\Pi$. Since $j\prec i$, we have $j\in\Pi^{(r)}$.
From $i\in\Pi(C)$ and $j\notin\Pi(C)$, it follows that $i$ precedes
$j$ in $\Pi$ (and thus in $\Pi^{(r)}$). Let $i'$ be the item directly
preceding $j$ in $\Pi^{(r)}$. If $i'=i$, $i$ was compared with
$j$ when it was inserted into $\Pi^{(r)}$, with the result that
$v(i)\geq\varphi v(j)$ and thus $v(\Pi(C))\geq\varphi v(j)$. If
$i'\neq i$, by definition of $j$, we still have $i'\in\Pi(C)$.
Also, either $i'\succ j$ and thus $v(i')\geq v(j)$, or $j$ was
compared with $i'$ when it was inserted into $\Pi$ in an earlier
iteration of $\textsc{UniversalUD}$, with the result that $v(i')>\frac{1}{\varphi}v(j)$.
Again, $v(\Pi(C))\geq v(i)+v(i')>v(j)+\frac{1}{\varphi}v(j)=\varphi v(j)$.

In both cases it follows from $j\notin\Pi(C)$ that $v(\OPT(\I,C))\leq C<v(\Pi(C))+v(j)$,
and using $v(j)\leq\frac{1}{\varphi}v(\Pi(C))$ we get 
\[
\frac{v(\OPT(\I,C))}{v(\Pi(C))}<\frac{v(\Pi(C))+v(j)}{v(\Pi(C))}<1+\frac{1}{\varphi}=\varphi.
\]

Now, assume that $M=\emptyset$. Intuitively, this means that $\Pi(C)$
consists of a prefix of $L$ (the smallest items). Let $i_{1}\succ\dots\succ i_{k}$
be the items in $\Pi(C)\setminus\OPT(\I,C)$, and let $j_{1}\succ\dots\succ j_{l}$
be the items in $\OPT(\I,C)\setminus\Pi(C)$. As $\Pi(C)$ consists
of a prefix of $L$, we have $|\Pi(C)|\geq|\OPT(\I,C)|$ and thus
$k\geq l$. If $k=0$, the claim trivially holds. Otherwise, since
$M$ is empty, we have $j_{l}\succ i_{1}$. Is suffices to show $v(j_{h})\leq\varphi v(i_{h})$
for all $h\leq l$. To this end, we consider any fixed $h\leq l$.
From $v(\{i_{1},\dots,i_{h-1}\})\leq v(\{j_{1},\dots,j_{h-1}\})$
it follows that
\[
v(j_{h})\leq v(\OPT(\I,C))-v(\{j_{1},\dots,j_{h-1}\})\leq C-v(\{i_{1},\dots,i_{h-1}\}).
\]
 This implies that $j_{h}$ cannot precede all items of $\{i_{h},\dots,i_{k}\}$
in $\Pi$, as $j_{h}\notin\Pi(C)$. Hence, there is an item $i\in\{i_{h},\dots,i_{k}\}$
that precedes $j_{h}$ in $\Pi$. Since $j_{h}\succ i$, in the iteration
when $\textsc{UniversalUD}$ inserted $j_{h}$ into $\Pi$, $i$ was
already present. From the fact that $i$ ended up preceding $j_{h}$
it follows that $j_{k}$ was compared with $i$ and thus $v(j_{h})<\varphi v(i)\leq\varphi v(i_{h})$.
We obtain
\[
\frac{v(\OPT(\I,C))}{v(\Pi(C))}\leq\frac{v(\OPT(\I,C)\setminus\Pi(C))}{v(\Pi(C)\setminus\OPT(\I,C))}=\frac{\sum_{h=1}^{l}
v(j_{h})}{\sum_{h=1}^{k}v(i_{h})}\leq\frac{\sum_{h=1}^{l}\varphi v(i_{h})}{\sum_{h=1}^{l}v(i_{h})}=\varphi,
\]
{which implies the result.}
\end{proof}
A naïve implementation of \noun{UniversalUD }runs in time $\Theta(n^{2})$.
We improve this running time to $\Theta(n\log n)$. 
\begin{theorem}
The algorithm $\textsc{UniversalUD}$ can be implemented to run in
time $\Theta(n\log n)$. \label{thm:running time UniversalUD}\end{theorem}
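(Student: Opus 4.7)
The plan is to maintain the evolving sequence $\Pi^{(r)}$ inside a balanced order-statistic tree (for example an AVL tree or a treap) whose in-order traversal reproduces $\Pi^{(r)}$, augmented at every node with the minimum value of the items stored in its subtree. After sorting $\I$ with respect to $\prec$ in $\Theta(n\log n)$ time, I would carry out each of the $n$ iterations of the outer loop of Algorithm~\ref{our_algo_ud} in $O(\log n)$ time, yielding a total running time of $\Theta(n\log n)$. The matching lower bound is standard: the output permutation encodes the value-sorted order of $\I$, so any comparison-based implementation must perform $\Omega(n\log n)$ comparisons on the rational input.

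The main step is to replace the inner while-loop by a single $O(\log n)$ search. By inspection of the loop, the position $j$ at which $L_r$ is inserted is the smallest index satisfying $v(\Pi^{(r-1)}_j)\leq v(L_r)/\varphi$, or $j=|\Pi^{(r-1)}|+1$ if no such index exists. Using the subtree-minimum augmentation, this leftmost qualifying position can be located by the standard top-down recursion: at the current node, I would descend into the left subtree whenever its stored minimum is at most $v(L_r)/\varphi$; otherwise test the current node; otherwise descend into the right subtree. Once $j$ is known, inserting $L_r$ as the new in-order predecessor of $\Pi^{(r-1)}_j$, and updating subtree sizes and minima along the root path (as well as through any rebalancing rotations), is a routine $O(\log n)$ operation in standard balanced-tree implementations.

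Two small points remain. First, the comparison $v(L_r)<\varphi\cdot v(\Pi^{(r-1)}_j)$ involves the irrational $\varphi$, but as observed before Algorithm~\ref{our_algo_ud}, with $a=v(L_r)$ and $b=v(\Pi^{(r-1)}_j)$ it is equivalent to the rational inequality $a/b<1+b/a$ and hence can be evaluated in constant time. Second, I need the subtree-minimum augmentation to be preserved under rotations, which is standard because each subtree minimum is determined in constant time from the two child minima and the node's own value. I do not anticipate any substantial obstacle beyond this augmentation argument; the only conceptual point is that, once a subtree-min invariant is maintained, the inner scan of Algorithm~\ref{our_algo_ud} collapses to a logarithmic binary descent.
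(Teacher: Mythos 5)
Your argument is correct, but it goes a genuinely different way than the paper's proof. The paper does not keep the whole sequence in a single augmented tree: it maintains a secondary balanced search tree $T$ containing only those items that were at some iteration inserted at the \emph{front} of $\Pi$, which (because items are processed in increasing order) appear in $\Pi$ in decreasing order of value, so that $T$ is simultaneously ordered by value and by position. The paper then proves a structural lemma about the permutation being built --- between two consecutive items $j,i$ of $T$ no item smaller than $j$ occurs --- which guarantees that the leftmost item $i$ of $\Pi^{(r-1)}$ with $v(L_r)\geq\varphi\, v(i)$, if it exists, always lies in $T$; a plain predecessor search in $T$ then finds it in $O(\log n)$ time. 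You instead replace this instance-specific structural argument by generic machinery: an order-statistic tree over the \emph{entire} current sequence, keyed by position and augmented with subtree minima, in which the leftmost position with value at most $v(L_r)/\varphi$ is located by the standard top-down descent, and the item is spliced in at that rank. Both yield $O(\log n)$ per iteration and $\Theta(n\log n)$ overall (the initial sort already accounts for the lower bound of this implementation). Your route is more robust in that it needs no property of the output permutation at all, at the price of a heavier (but entirely standard) augmentation that must be preserved under rotations, which you correctly note is constant work per rotation; the paper's route gets away with an unaugmented balanced BST by exploiting that the candidate stopping positions form a decreasing, front-inserted subsequence. You also correctly dispose of the one genuine subtlety shared by both approaches, namely that every comparison against the irrational threshold $v(L_r)/\varphi$ reduces to the exact rational test $a/b<1+b/a$.
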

\begin{proof}
To improve the running time from the naïve $\Theta(n^{2})$, we maintain
a balanced search tree $T$ that stores a subset of the items in $\Pi$
sorted decreasingly by their sizes. Whenever an item gets inserted
to the front of~$\Pi$, and only then, we also insert it into $T$.
This way, the items in $T$ remain sorted by their positions in $\Pi$
throughout the execution of the algorithm. We need an efficient way
of finding, in each iteration $r$ of $\textsc{UniversalUD}$~(Algorithm~\ref{our_algo_ud}),
the first item $i$ in $\Pi^{(r)}$ for which $v(L_{r})\geq\varphi v(i)$,
or detecting that no such item exists. We claim that, if such an item
exists, it is stored in $T$ and can thus be found in time $\Theta(\log n)$.

It suffices to show that for every item $i\in T$ and its predecessor
$j$ in $T$ we have that none of the items that precede $i$ in $\Pi$
are smaller than $j$. To see this, we argue that none of the items
between $j$ and $i$ in $\Pi$ are smaller than $j$. We can then
repeat the argument for $j$ and its predecessor $j'$, etc. For the
sake of contradiction, let $i'$ be the first item between $j$ and
$i$ with $v(i')<v(j)$. None of the items between $j$ and $i'$
are smaller than $j$, hence both $j$ and $i'$ are inserted into
$\Pi$ earlier than all of them. Let $r$ be the iteration in which
$j$ is inserted into $\Pi$. Since $i'$ is inserted earlier into
$\Pi$, and since $j$ is inserted to the front of $\Pi^{(r)}$, $i'$
is at the front of $\Pi^{(r-1)}$. This is a contradiction to $i'$
not being in $T$.
\end{proof}
We now establish that $\textsc{UniversalUD}$ is best possible, even
if we permit non-universal policies.
\begin{theorem}
There are instances of the oblivious knapsack problem where no policy
achieves a robustness factor of $\varphi-\delta$, for any $\delta>0$,
even when all items have unit density.\end{theorem}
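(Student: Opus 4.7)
The plan is to construct, for each integer $n$, a unit-density instance $\mathcal{I}_n$ for which every policy---adaptive or not---has robustness at least $\varphi - O(1/n)$. The design goal is that for every item $x \in \mathcal{I}_n$ there exists a critical capacity $C_x$ such that (i) $x$ fits, but $v(x) + v(y) > C_x$ for every other item $y$, so any policy packing $x$ first can pack nothing else afterwards; and (ii) there is a subset $S \subseteq \mathcal{I}_n \setminus \{x\}$ with $v(S) \leq C_x$ and $v(S) \geq (\varphi - O(1/n)) \cdot v(x)$. Concretely, I would take items whose sizes form a carefully perturbed Fibonacci-like progression in which many items lie in a narrow interval close to a common scale, supplemented with auxiliary items of size close to $\varphi v(x)/2$, so that every $x$ can be isolated by some $C_x$ while a packing of near-$\varphi v(x)$ value exists without $x$.

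Given such a construction, the main argument is short: any policy $\mathscr{P}$ has a fixed root specifying the first item $x^*$ it tries. At capacity $C = C_{x^*}$, item $x^*$ fits and is packed. By (i), no further item fits afterwards, so regardless of $\mathscr{P}$'s subsequent adaptive choices, $\mathscr{P}(C) = \{x^*\}$. By (ii), $v(\opt(\mathcal{I}_n, C)) \geq (\varphi - O(1/n)) \cdot v(x^*) = (\varphi - O(1/n)) \cdot v(\mathscr{P}(C))$, and letting $n \to \infty$ yields a robustness lower bound arbitrarily close to $\varphi$.

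The hard part is the construction itself. Conditions (i) and (ii) together force $\varphi \cdot v(x) \leq C_x < v(x) + \min_{y \neq x} v(y)$, so $v(x) < \varphi \cdot \min_{y \neq x} v(y)$ for every $x$; thus every item must be within a multiplicative factor $\varphi$ of every other. Inside this narrow range, condition (ii) typically requires $S$ of size two with each element in the interval $(v(x)/\varphi, v(x))$, so item sizes essentially lie in an interval $[m, \varphi m)$ for some scale $m$. The delicate case is the largest item, for which the required pair $S$ must lie near the high end of the interval; arranging enough items at several geometric scales, so that each scale furnishes the matching pair for items of the next scale above, handles this. Verifying that the perturbations can be chosen so that the $O(1/n)$ loss vanishes as $n \to \infty$, and that no adaptive policy can escape the fixed-first-item argument, is the remaining technical content of the proof.
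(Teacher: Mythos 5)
Your isolation argument (fix the policy's root $x^*$, exhibit a capacity $C_{x^*}$ at which $x^*$ fits but nothing else does afterwards, and compare against a witness set $S$ with $v(S)\geq(\varphi-\delta)v(x^*)$) is exactly the structure of the paper's proof. The gap is that you never produce the instance, and the construction you sketch has an internal tension that the paper resolves quite differently.

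You correctly derive that conditions (i) and (ii) force all items to lie within a multiplicative factor $\approx\varphi$ of one another. But you then propose ``arranging enough items at several geometric scales, so that each scale furnishes the matching pair for items of the next scale above.'' Once the entire instance is confined to an interval $[m,\varphi m)$ there is no room for several geometric scales, and the cascade you describe (each item's witness pair lives one scale down) cannot terminate: the smallest items have nothing below them, and the largest item's pair would need to sum to nearly $\varphi v(x)$, forcing both members very close to $v(x)$, and those in turn need their own witnesses. The paper closes this loop in a way your pair-centric framing misses: the two \emph{smallest} items $v_1=v_2=1+\varepsilon$ are beaten by the single \emph{largest} item $v_5=\varphi$, not by a pair. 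That singleton witness is the base case that stops the recursion. The paper's instance then has only five items, explicitly $v_1=v_2=1+\varepsilon$, $v_3=2/\varphi$, $v_4=1+1/\varphi^2$, $v_5=\varphi$, and a four-case check (on which item is tried first) gives the ratio $\varphi$ up to $O(\varepsilon)$, with $\varepsilon\to 0$ playing the role of your $1/n\to 0$. Your Fibonacci-style $n$-item family is not needed and, as sketched, it is unclear it can be made to work; the concrete five-item instance is the entire remaining content of the proof, so leaving it as ``the hard part'' leaves the proof essentially unstarted.
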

\begin{proof}
Consider an instance of the oblivious knapsack problem with five items
of unit density and values equal to
$v_{1}=1+\varepsilon,v_{2}=1+\varepsilon,v_{3}=2/\varphi,v_{4}=1+1/\varphi^{2},v_{5}=\varphi$,
for sufficiently small $\varepsilon>0$. We show that no algorithm
achieves a robustness factor of $\varphi-\delta$ for this instance.
To this end we consider an arbitrary algorithm $\mathscr{A}$ and
distinguish different cases depending on which item the algorithm
tries to pack first. 
\renewcommand{\theenumi}{(\alph{enumi})}
\renewcommand{\labelenumi}{\theenumi}

\begin{enumerate}
\item If $\mathscr{A}$ tries item 1 or item 2 first, it cannot fit any
additional item for a capacity equal to $v_{5}=\varphi$, as even
$v_{1}+v_{2}>\varphi$. For this capacity $\mathscr{A}$ is worse
by a factor of $\varphi/(1+\varepsilon)>\varphi-\delta$ than the
optimum solution, which packs item 5. 

\item If $\mathscr{A}$ tries item 3 first, it cannot fit any additional
item for a capacity equal to $v_{1}+v_{2}=2+2\varepsilon$, as even
$v_{3}+v_{1}>2+2\varepsilon$. For this capacity $\mathscr{A}$ is
worse by a factor of $(1+\varepsilon)\varphi>\varphi-\delta$ than
the optimum solution which packs items 1 and~2. 
\item If $\mathscr{A}$ tries item 4 first, it cannot fit any additional
item for a capacity equal to $v_{2}+v_{3}=1+2/\varphi+\varepsilon$,
as even $v_{4}+v_{1}=2+1/\varphi^{2}+\varepsilon>1+2/\varphi+\varepsilon$.
For this capacity $\mathscr{A}$ is worse by a factor of
$\frac{1+2/\varphi+\varepsilon}{1+1/\varphi^{2}}>\frac{\varphi+1/\varphi}{1+1/\varphi^{2}}=\varphi>\varphi-\delta$
than the optimum solution which packs items 2 and 3. 
\item If $\mathscr{A}$ tries item 5 first, it cannot fit any additional
item for a capacity equal to $v_{3}+v_{4}=\varphi+1$, as even $v_{5}+v_{1}=\varphi+1+\varepsilon>\varphi+1$.
For this capacity $\mathscr{A}$ is worse by a factor of $\frac{\varphi+1}{\varphi}=\varphi>\varphi-\delta$
than the optimum solution which packs items 3 and 4.  
\end{enumerate}
\end{proof}

\section{Hardness\label{sec:Hardness}}

Although we can always find a $2$-robust universal policy in polynomial
time, we show in this section that, for any fixed $\alpha\in(1,\infty)$,
it is intractable to decide whether a given policy is $\alpha$-robust,
even if it is universal. This hardness result also holds for instances
with unit densities when $\alpha$ is part of the input. As the final
-- and arguably the most interesting -- result of this section, we
establish $\classcoNP$-hardness of the the problem to decide for
a given instance and given $\alpha>1$, whether the instance admits
a universal policy with robustness factor $\alpha$. All proofs rely
on the hardness of the following version of $\textsc{SubsetSum}$.
\begin{lemma}
\label{lem:subset_sum}Let $W=\{w_{1},w_{2},\dots,w_{n}\}$ be a set
of positive integer weights and $T\leq\sum_{k=1}^{n}w_{k}$ be a target
sum. The problem of deciding whether there is a subset $U\subseteq W$
with $\sum_{w\in U}w=T$ is $\classNP$-complete, even when\end{lemma}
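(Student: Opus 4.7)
The natural plan is to reduce from an unrestricted version of \textsc{SubsetSum}, which is well known to be $\classNP$-complete, and then to modify the input so that the desired structural side condition is met. Membership in $\classNP$ is immediate: a subset $U$ certifying $\sum_{w\in U} w = T$ is a polynomial-size witness that can be checked by summing. So the work is entirely on the hardness side.

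First I would take an arbitrary instance $(W',T')$ of plain \textsc{SubsetSum} with $W'=\{w'_1,\dots,w'_n\}$ and $T'\le\sum_k w'_k$. To enforce whatever restriction follows ``even when'' (e.g.\ pairwise distinct weights, weights bounded below, or $T$ lying in a prescribed range), I would scale every weight by a large integer factor $M$ and then add small, carefully chosen perturbations. Concretely, setting $w_k := M\cdot w'_k + p_k$ with distinct perturbations $p_k\in\{1,\dots,n\}$ and $M > n^2$, and setting $T := M\cdot T' + \sum_{k\in S} p_k$ for the (unknown!) target subset would not work; instead the standard trick is to make the perturbations dominated by $M$ so that any subset of size $s$ summing to $T := M\cdot T' + s\cdot q$ (for a suitable fixed $q$) forces the scaled part to reconstruct the original subset. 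Equivalently, one can introduce a few additional ``padding'' items whose sizes force a specific cardinality or a specific residue class, so that the structural constraint is automatically satisfied in any feasible solution.

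The correctness argument then has two directions. In the forward direction, a solution $U'\subseteq W'$ to the original instance lifts to a solution $U\subseteq W$ of the constructed instance by including exactly the images of $U'$ together with the appropriate padding items; by construction the perturbations cancel out and the weights satisfy the side condition. In the reverse direction, one uses the dominance $M > \sum_k p_k$ (or an analogous gap argument) to argue that any subset of $W$ summing to $T$ must, when projected back modulo $M$, pin down the number and identity of selected padding items, and then dividing the remaining sum by $M$ recovers a valid $U'\subseteq W'$.

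The main obstacle is purely bookkeeping: choosing $M$, the $p_k$, and the padding items so that (i) the new weights actually satisfy the side condition stated in the lemma, (ii) the target $T$ does as well (in particular, $T\le\sum_k w_k$ is preserved), and (iii) all weights remain polynomial in the input size so that the reduction runs in polynomial time. Once $M$ is chosen large enough to separate the scaled part from the perturbations but small enough to keep the encoding polynomial (e.g.\ $M = \Theta(n\cdot\max_k w'_k)$), the equivalence of the two instances follows by routine arithmetic, and $\classNP$-completeness transfers.
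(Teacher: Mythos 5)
Your high-level plan — reduce from unrestricted \textsc{SubsetSum}, then rescale the input and append a few padding items so that the structural side conditions are automatically met in any feasible solution — is exactly the strategy the paper uses. But your proposal never becomes a proof, because you never commit to what the side conditions actually are (the ``even when'' clause was elided in the statement you were given), and the entire content of the lemma lives in those specifics. The constraints in the paper are: (i) the target $T$ is a power of two $2^k$ with $k\ge 3$; (ii) every weight lies in $[2,T/2)$; and (iii) every weight has distance at least $2$ from the nearest power of two. Generic ``scale by a large $M$ and perturb'' does not deliver (iii); you need a scaling factor chosen so that the rescaled weights provably avoid powers of two, and a random or arbitrary $M$ does not do that.

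For comparison, here is what the paper's construction actually does. Starting from $(W',T')$, multiply everything by $6$: since a multiple of $6$ is even and divisible by $3$ it can neither equal a power of two nor differ from one by $1$, so (iii) holds for all rescaled weights with no further work. Then pick $T'' = 2^{\sigma}$ with $\sigma = \lceil \log_2(T' + \sum_{w'\in W'} w')\rceil + 2$, which gives (i) and at the same time makes $T''$ more than four times the total rescaled weight, so (ii) holds for the rescaled items. To close the gap $T''-T'$, add exactly two padding items $u = \lfloor (T''-T')/2\rfloor$ and $w = \lceil (T''-T')/2\rceil$. Because $T'' > 4\sum w'$, any subset summing to $T''$ must contain both $u$ and $w$, so equivalence with $(W',T')$ is preserved; and because $T''-T' > 3T''/4$ while $T''$ is a power of two, both $u$ and $w$ land strictly between $3T''/8$ and $T''/2 - 1$, giving (ii) and (iii) for the padding items as well. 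Your sketch gestures at all three ingredients (a scaling step, a target in a prescribed form, padding items that are forced into every solution), but leaves out the choices that make each constraint actually hold, and those choices are where the argument is.
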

\begin{enumerate}
\item $T=2^{k}$ for some integer $k\geq3$,\label{enu:subsetSum1}

\begin{enumerate}
\item all weights are in the interval $[2,T/2)$,\label{enu:subsetSum2}
\item all weights have a difference of at least 2 to the closest power of
2.\label{enu:subsetSum3}
\end{enumerate}
\end{enumerate}
\begin{proof}
Without Properties~\ref{enu:subsetSum1} to \ref{enu:subsetSum3},
the $\textsc{SubsetSum}$ problem is well known to be $\classNP$-complete
(e.g., Garey and Johnson~\cite{gareyJ79}). Given an instance $(W,T)$
of this classical problem, we construct an equivalent instance with
Properties~\ref{enu:subsetSum1} to \ref{enu:subsetSum3}. We first
multiply all weights in $W$ as well as the target sum $T$ with $6$
to obtain an equivalent instance $(W',T')$. In the new instance,
all weights are even but not a power of 2, hence they have distance
at least 2 to the closest power of 2. We set $T''=2^{\sigma}$, with
$\sigma=\left\lceil \log_{2}(T'+\sum_{w'\in W'}w')\right\rceil +2$
and define two new weights
\[
u=\left\lfloor \frac{T''-T'}{2}\right\rfloor ,\quad w=\left\lceil \frac{T''-T'}{2}\right\rceil .
\]
We set $W''=W'\cup\{u,w\}$ to obtain the final instance $(W'',T'')$.
Properties~\ref{enu:subsetSum1} and \ref{enu:subsetSum2} are satisfied
by construction. Also, any solution to the instance $(W'',T'')$ has
to include both $u$ and $w$, since $T''>4\cdot\sum_{w'\in W'}w'$.
Hence, the instance remains equivalent to the original instance $(W,T)$.
Since $T''-T'>3T''/4$, and since $T''$ is a power of two, the new
items $u$ and $w$ are far enough from the closest power of 2 (which
either is $T''/2$ or $T''/4$).
\end{proof}
We first show that it is intractable to determine the robustness factor
of a given universal policy.
\begin{theorem}
\label{thm:hardness} For any fixed and polynomially representable
$\alpha>1$ it is $\classcoNP$-complete to decide whether a given
universal policy for the oblivious knapsack problem is $\alpha$-robust.\end{theorem}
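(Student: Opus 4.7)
The plan splits into membership in coNP and coNP-hardness.

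\emph{Membership.} The complementary question is whether some capacity $C$ satisfies $v(\OPT(\I,C)) > \alpha \cdot v(\Pi(C))$. Since both $v(\OPT(\I,\cdot))$ and $v(\Pi(\cdot))$ are right-continuous step functions whose jumps occur only at subset sums of item sizes, it suffices to test the inequality at capacities of the form $C = l(A)$ for some $A \subseteq \I$. A short certificate for the complementary problem thus consists of such an $A$, defining $C := l(A)$, together with a packing $S \subseteq \I$ with $l(S) \leq C$ and $v(S) > \alpha v(\Pi(C))$; both $\Pi(C)$ and the inequality are computable in polynomial time.

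\emph{coNP-hardness.} I would reduce from the restricted version of \textsc{SubsetSum} established in Lemma~\ref{lem:subset_sum}: weights $w_k \in [2,T/2)$, each at distance at least $2$ from every power of two, with $T = 2^\kappa$. The plan is to construct a knapsack instance $\I$ and a universal policy $\Pi$ such that $\Pi$ fails to be $\alpha$-robust iff some subset of the weights sums exactly to $T$. The instance would include one \emph{weight item} of size and value $M w_k$ (with a scaling factor $M$ chosen depending on $\alpha$) for every input weight, together with auxiliary items placed at designated positions in $\Pi$. The auxiliary items consist of a \emph{trap item} at the front of $\Pi$ whose size and value are calibrated to $\alpha$ and $T$, plus a collection of small \emph{filler items} (sized using powers of two) that keep $\Pi$ competitive at all non-critical capacities. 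The sizes and values are tuned so that at one designated capacity $C^*$ the policy is forced to commit to the trap, leaving a residual capacity of exactly $M T$, and \OPT\ can reach the threshold $\alpha \cdot v(\Pi(C^*))$ only by packing a subset of weight items whose sizes sum exactly to $M T$---equivalently, an exact solution of \textsc{SubsetSum}. The distance-$\geq 2$-from-powers-of-two property amplifies the integer gap between exact and near-exact subset sums, which, after multiplication by $M$, produces a multiplicative separation that cleanly straddles $\alpha$.

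\emph{Main obstacle.} The delicate part is certifying $\alpha$-robustness of $\Pi$ at \emph{every} capacity other than $C^*$. In particular, one must handle capacities just above $C^*$, where \OPT\ can improve via a newly fitting item while $\Pi$ is still blocked by the trap; and capacities below $C^*$, where the policy has not yet encountered the weight items and must rely on fillers. The filler items must be structured so that, at every breakpoint, either $\Pi$ has a matching dense prefix or $v(\OPT(\I,C))$ is intrinsically too small to violate the ratio. Ensuring this simultaneously at all breakpoints, without inadvertently creating a second bad capacity or enabling \OPT\ to beat the ratio at $C^*$ through a non-subset-sum packing, is the technical heart of the reduction and exploits both the power-of-two structure of $T$ and the arithmetic separation of the weights from powers of two.
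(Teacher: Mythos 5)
Your high-level plan matches the paper's: membership in $\classcoNP$ via a short capacity-plus-packing certificate, and $\classcoNP$-hardness via a reduction from the restricted \textsc{SubsetSum} of Lemma~\ref{lem:subset_sum}, using weight items, power-of-two ``filler'' items, and a special item at the front of the sequence $\Pi$, with one critical capacity on which \textsc{SubsetSum} solvability decides $\alpha$-robustness. The membership argument is fine. However, the hardness direction is only sketched, and you explicitly flag as unresolved exactly the part that constitutes the bulk of the paper's proof.

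Two concrete points of concern. First, your ``trap'' mechanism is backwards relative to the paper's. You write that at the critical capacity $C^*$ the policy ``is forced to commit to the trap, leaving a residual capacity of exactly $MT$.'' But if $\Pi$ packs the trap and then has residual $MT$, nothing prevents $\Pi$ itself from filling that residual with the filler items, and nothing forces \OPT\ to touch the weight items at all, so the dichotomy collapses. The paper instead makes the dummy item have size $T+1$, so it is \emph{not} packed at the critical capacity $T$; $\Pi$ is then stuck with only the auxiliary items (total size $T-1$), no regular item fits in the residual gap of size $1$, and \OPT\ either reaches $T$ (if \textsc{SubsetSum} is solvable) or is capped at $T-\epsilon$ (otherwise). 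Getting this sign right is essential.

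Second, the verification at all non-critical capacities—which you correctly identify as the technical heart—is not something you can wave away with ``the fillers keep $\Pi$ competitive.'' The paper makes this work by a careful interplay of three choices: the auxiliary items have sizes $2^0,\dots,2^{m}$ with $m=\log_2 T - 1$ (so they pack \emph{every} integer capacity in $[1,T-1]$ without a gap), their values are scaled by $(1-\epsilon)$ with $\epsilon$ chosen so that $\alpha=\tfrac{T-\epsilon}{(T-1)(1-\epsilon)}$ exactly, and the dummy item's value is set to $\tfrac{1-\epsilon}{\epsilon}\bigl(v(\I_{\mathrm{aux}})+v(\I_{\mathrm{reg}})\bigr)$ so that for any $C\geq T+1$ the ratio is at most $\tfrac{1}{1-\epsilon}<\alpha$. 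Your proposal introduces an unexplained scaling factor $M$, invokes the distance-from-powers-of-two property of Lemma~\ref{lem:subset_sum} (which the paper uses only in the proof of the \emph{last} theorem, not this one), and does not pin down $\epsilon$ or how the trap's value is tied to the other items. Until those are fixed, there is no way to confirm that no spurious bad capacity is created, so the reduction as proposed has a genuine gap.
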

\begin{proof}
Regarding the membership in $\classcoNP$, note that if a universal
policy $\Pi$ is not $\alpha$-robust, then there is a capacity $C$
such that $v(\Pi(C))<v(\OPT(\I,C))/\alpha$. Thus, $C$ together with
$\OPT(\I,C)$ is a certificate for $\Pi$ not being an $\alpha$-robust
solution. 

For the proof of $\classcoNP$-hardness, we reduce from the variant
of \textsc{SubsetSum} specified in Lemma~\ref{lem:subset_sum}. An
instance of this problem is given by a set $W=\{w_{1},w_{2},\dots,w_{n}\}$
of positive integer weights in the range $[2,T/2)$ and a target sum
$T=2^{k}$ for some integer $k\geq3$. Let $\alpha>1$ be polynomially
representable. We may assume without loss of generality that $\alpha>\frac{T}{T-1}$
as we can ensure this property by multiplying $T$ and all items in
$W$ by a sufficiently large power of 2.

We construct an instance $\I$ and a sequence $\Pi$ such that $\Pi$
is an $\alpha$-robust universal policy for $\mathcal{I}$ if and
only if the instance of \textsc{SubsetSum} given by $W$ and $T$
has no solution. To this end, we introduce for each weight $w\in W$
an item with value and size equal to $w$. In this way, the optimal
knapsack solution for capacity $T$ is at least $T$ if the instance
of \textsc{SubsetSum} has a solution. Furthermore, we introduce a
set of additional items that make sure that the robustness factor
for all capacities except $T$ is at most $\alpha$ while maintaining
the property that the optimal knapsack solution for capacity $T$
is strictly less than $T$ if the instance of \textsc{SubsetSum} has
no solution.

We now explain the construction of $\I$ and $\Pi$ is detail. Let
$\smash{\epsilon=\frac{\alpha(T-1)-T}{\alpha(T-1)-1}}$, i.e., $\smash{\alpha=\frac{T-\epsilon}{(T-1)(1-\epsilon)}}$.
Note that $\epsilon\in(0,1)$ by our assumptions on $T$ and $\alpha$.
For each weight $w\in W$, we introduce an item $i_{w}$ with $l(i_{w})=v(i_{w})=w$.
The set of these items is called \emph{regular} and is denoted by
$\I_{\text{{reg}}}$. Furthermore, we introduce a set of auxiliary
items. Let $m=\log_{2}T-1$. Then, for each $k\in\{0,1,\dots,m\}$,
we introduce an auxiliary item $j_{k}$ with size $l(j_{k})=2^{k}$
and value $v(j_{k})=2^{k}\,(1-\epsilon)$. Denoting the set of auxiliary
items by $\I_{\text{{aux}}}$, we have $l(\I_{\text{{aux}}})=\sum_{k=0}^{m}2^{k}=T-1$.
Finally, we introduce a dummy item~$d$ with $l(d)=T+1$ and 
\[
v(d)=\frac{1-\epsilon}{\epsilon}\left(v(\I_{\mathrm{aux}})+v(\I_{\mathrm{reg}})\right)=\frac{1-\epsilon}{\epsilon}
\Biggl((T-1)(1-\epsilon)+\sum_{w\in W}w\Biggr).
\]
The universal policy $\Pi$ is defined as $\Pi=(d,j_{m},j_{m-1},\dots,j_{0},i_{w_{n}},i_{w_{n-1}},\dots,i_{w_{1}})$.
The hardness proof relies on the claim that $\Pi$ is a $\frac{1}{1-\epsilon}$-robust
universal policy for all capacities except $T$, i.e., 
\begin{equation}
v(\OPT(\I,C))\leq\frac{1}{1-\varepsilon}v(\Pi(C))\,\textrm{for all }C\neq T.\label{eq:except at T}
\end{equation}

As all item sizes are integer, it suffices to consider integer capacities.
To prove \eqref{eq:except at T}, let us first consider capacities
$C\leq T-1$. Since the density of each item with size not larger
than $T-1$ is bounded from above by $1$, it is sufficient to show
that $v(\Pi(C))=C(1-\epsilon)$. To this end, we show that every capacity
$C\in\{1,\dots,2^{m+1}-1=T-1\}$ is packed without a gap by the exponentially
decreasing sequence of items $j_{m},j_{m-1},\dots,j_{0}$. We prove
this statement by induction over $m$. For~$m=0$, the statement
is true, since there is only a single item with length $1$, which
packs the capacity $C=1$ optimally. Now assume that the statement
is true for all $m'<m$ and consider the sequence $j_{m},j_{m-1},\dots,j_{0}$.
We distinguish two cases. For capacities $C\in\{2^{m},\dots,2^{m+1}-1\}$,
item~$j_{m}$ is packed and, using the induction hypothesis, the
residual capacity $\tilde{C}=C-2^{m}\leq2^{m+1}-1-2^{m}\leq2^{m}-1$
can be packed without a gap by the remaining sequence $j_{m-1},j_{m-2},\dots,j_{0}$.
For capacities $C<2^{m}$, item $j_{m}$ is not packed, and, again
using the induction hypothesis, we derive that $C$ can be packed
by $j_{m-1},\dots,j_{0}$. This completes the proof of our claim for
$C\leq T-1$.

Let us now consider our claim for capacities $C\geq T+1$. In this
case, $d\in\Pi(C)$ and we can trivially bound the robustness factor
of $\Pi$ by observing that
\[
\frac{v(\OPT(\I,C))}{v(\Pi(C))}\leq\frac{v(\I)}{v(d)}=1+\frac{(T-1)(1-\epsilon)+\sum_{w\in
W}w}{v(d)}=1+\frac{\epsilon}{1-\epsilon}=\frac{1}{1-\epsilon}.
\]

We proceed to show that $\Pi$ is an $\alpha$-robust universal policy
if and only if the instance of \textsc{SubsetSum} given by $W$ and
$T$ has no solution. Let us first assume that the instance of \textsc{SubsetSum}
has no solution. We prove that $\Pi$ is $\alpha$-robust. For all
capacities except $T$ this is clear from claim~ \eqref{eq:except at T}.
For capacity $T$, we argue as follows: As there is no packing of
$T$ with items of density~1, we bound $v(\OPT(\I,T))$ from above
by $(T-1)+(1-\epsilon)$, whereas $\Pi$ packs all auxiliary items.
We get 
\[
\frac{v(\OPT(\I,T))}{v(\Pi(T))}\leq\frac{(T-1)+(1-\epsilon)}{(T-1)(1-\epsilon)}=\alpha.
\]

Now, assume that the instance of \textsc{SubsetSum} has a solution.
Then, $v(\OPT(T))=T$ and thus 
\begin{align*}
\frac{v(\OPT(\I,T))}{v(\Pi(T))}=\frac{T}{(T-1)(1-\epsilon)}>\alpha,
\end{align*}
and we conclude that $\Pi$ is not $\alpha$-robust. 
\end{proof}

We give a result similar to Theorem~\ref{thm:hardness} for instances in which each item has unit
density. Note that this time we require $\alpha$ to be part of
the input.

\pagebreak

\begin{theorem}
\label{thm:hardness_unit} It is $\classcoNP$-complete to decide
whether, for given $\alpha>1$, a given universal policy for the oblivious
knapsack problem is $\alpha$-robust, even when all items have unit
density.\end{theorem}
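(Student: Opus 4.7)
Membership in $\classcoNP$ is immediate, exactly as in Theorem~\ref{thm:hardness}: a capacity $C$ together with a subset $S \subseteq \I$ satisfying $l(S) \leq C$ and $v(S) > \alpha \cdot v(\Pi(C))$ is a polynomial-size certificate that $\Pi$ is not $\alpha$-robust.

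For $\classcoNP$-hardness, I plan to reduce from the restricted $\textsc{SubsetSum}$ variant of Lemma~\ref{lem:subset_sum}, adapting the construction of Theorem~\ref{thm:hardness} to unit densities. Since $\alpha$ is now part of the input of the reduction, I can tailor both the instance and $\alpha$ to the given $\textsc{SubsetSum}$ instance $(W, T)$. Concretely, the construction uses the same three families of items as in Theorem~\ref{thm:hardness} but now all of unit density: regulars $r_w$ with $v(r_w) = l(r_w) = w$ for $w \in W$, auxiliaries $j_k$ with $v(j_k) = l(j_k) = 2^k(1-\epsilon)$ for $k = 0, \ldots, m-1$ (where $m = \log_2 T$), and a dummy $d$ with $v(d) = l(d) = T + 1$. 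The policy is $\Pi = (d, j_{m-1}, \ldots, j_0, r_{w_n}, \ldots, r_{w_1})$, and I set $\epsilon = 1/(T+1)$ and $\alpha = (T+2)/T$; both are polynomially representable.

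The critical property enforced by this choice of $\epsilon$ is a rationality one: the auxiliary subset sums are exactly the multiples of $T/(T+1)$ in $[0, (T-1)T/(T+1)]$, and since $\gcd(T, T+1) = 1$, none of the nonzero multiples is an integer. Consequently, an auxiliary-plus-regular combination $a + r$ can equal the integer $T$ only if $a = 0$ and the regular part itself sums to $T$. From this I expect to derive three central claims paralleling the proof of Theorem~\ref{thm:hardness}: (i) $v(\Pi(T)) = (T-1)T/(T+1)$ by the standard binary-decomposition argument; (ii) using that regular sub-sums are even (from Lemma~\ref{lem:subset_sum}), that $v(\OPT(\I, T)) = T$ iff $\textsc{SubsetSum}$ is YES, and $v(\OPT(\I, T)) = (T-1)(T+2)/(T+1)$ otherwise; (iii) the ratios at $C = T$ are $(T+1)/(T-1) > \alpha$ in the YES case and exactly $(T+2)/T = \alpha$ in the NO case, which is the desired separation.

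The remaining step, which I expect to be the main obstacle, is to verify that $\Pi$ is $\alpha$-robust at every capacity $C \neq T$ regardless of the subset-sum answer. For $C \leq T - 1$, the binary fill of the auxiliaries gives $v(\Pi(C)) = CT/(T+1)$, and a direct calculation yields $v(\OPT(\I, C)) \leq [CT + C]/(T+1)$, so the ratio is at most $(T+1)/T < \alpha$. For $C > T$, the dummy is packed first and essentially the same binary-fill bound applies to the residual capacity $C - (T+1)$; the absence of the large-value dummy of Theorem~\ref{thm:hardness} (impossible under unit density) is compensated by the additive contribution $T + 1$, which absorbs enough slack to keep the ratio below $\alpha$ on $[T+1, 2T+1]$, while for $C > 2T+1$ the policy approaches $v(\I)$ and the ratio tends to $1$. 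Carefully combining these bounds over all capacity ranges completes the reduction.
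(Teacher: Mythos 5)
Your membership argument is fine, and your broad plan --- unit-density items split into regulars, auxiliaries scaled by $1-\epsilon$, a dummy layer, and a policy that tries dummy, then auxiliaries in decreasing order, then regulars in decreasing order --- is the same strategy the paper uses. But the construction as you describe it is broken, and the break is exactly in the part you flag as ``the main obstacle'': the large-capacity regime.

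The paper does \emph{not} use a single dummy. It introduces a geometric family $d_0,\ldots,d_{m'}$ with $m'=\lceil\log_2 w_n\rceil$, $v(d_k)=T\cdot 2^k$ for $k\ge1$, and $v(d_0)=T+\epsilon$. This family binary-fills the coarse scale, so that after trying all dummies the residual is below $T+\epsilon$, and the auxiliaries then bring the gap down to roughly $1+\epsilon T$. Your single dummy $d$ with $v(d)=T+1$ cannot do this. After $d$ and the auxiliaries are packed, the residual passed to the regulars can be anywhere in $[0,\sum_{w\in W} w]$, and first-fit-decreasing on the regulars can leave a gap close to the \emph{largest} regular, i.e.\ up to $T/2-1$. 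Meanwhile $\alpha-1 = 2/T$ is tiny, so a gap of order $T$ on a capacity of order $T$ destroys $\alpha$-robustness regardless of the \textsc{SubsetSum} answer. Your handwave that the ``additive contribution $T+1$ absorbs enough slack'' only covers $C\lesssim T+3$, and ``for $C>2T+1$ the ratio tends to $1$'' is simply false for capacities around $2T$--$3T$.

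Concretely: take $T=32$, $W=\{6,14\}$ (a valid NO-instance of the restricted \textsc{SubsetSum} of Lemma~\ref{lem:subset_sum}), $\epsilon=1/33$. At capacity $C=33+\frac{992}{33}+13=\frac{2510}{33}$, your policy packs $d$, all five auxiliaries, and then $r_6$ (but not $r_{14}$), leaving a gap of $7$ and value $v(\Pi(C))=\frac{2279}{33}$. An adversary packs $d$, $r_{14}$, $r_6$, and auxiliaries summing to $\frac{736}{33}$, for $v(\OPT(\I,C))=\frac{2485}{33}$. The ratio is $\frac{2485}{2279}\approx1.09>\frac{34}{32}=\alpha$. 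So for this NO-instance your $\Pi$ is not $\alpha$-robust, breaking the ``NO $\Rightarrow$ $\alpha$-robust'' direction of your reduction. To repair the argument you need the paper's doubling dummies (or an equivalent mechanism ensuring a small pre-regular residual for every $C>T$); the rest of your calculations at $C=T$ and $C\le T-1$ are essentially fine, modulo stating claim (ii) as an upper bound rather than an equality.
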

\begin{proof}
Membership in $\classcoNP$ follows from Theorem~\ref{thm:hardness}.
To prove hardness, we again reduce from \textsc{SubsetSum} (Lemma~\ref{lem:subset_sum})
using a similar construction as in the proof of Theorem~\ref{thm:hardness}.
Let the set $W=\{w_{1},\dots,w_{n}\}$ of weights and the target sum
$T\geq8$ of an instance of \textsc{SubsetSum} be given, with $w_{1}\leq w_{2}\leq\dots\leq w_{n}$.
We proceed to explain the construction of a universal policy $\Pi$
for which the decision whether $\Pi$ is $\alpha$-robust is $\classcoNP$-hard,
for some $\alpha>1$.

For each weight $w\in W$, we introduce an item $i_{w}$ with value
$v(i_{w})=w$. The set of these items is called \emph{regular} and
is denoted by $\I_{\text{{reg}}}$. Let $m=\log_{2}T-1$ and $\epsilon=1/T^{2}$.
For each $k\in\{0,\dots,m\}$, we introduce an auxiliary item $j_{k}$
with value $v(j_{k})=2^{k}(1-\epsilon)$. Denoting the set of auxiliary
items by $\I_{\text{{aux}}}$, we have $v(\I_{\text{{aux}}})=(1-\epsilon)\sum_{k=0}^{m}2^{k}=(1-\epsilon)(T-1)$.
We further introduce a set of dummy items $\I_{\text{{dum}}}=\{d_{0},\dots,d_{m'}\}$,
where $m'=\lceil\log_{2}w_{n}\rceil$. We set $v(d_{k})=T\cdot2^{k}$
for each $k\in\{1,\dots,m'\}$, and $v(d_{0})=T+\varepsilon$. The
values of the dummy items sum up to $v(\I_{\text{{dum}}})=(T+\epsilon)+T\sum_{k=1}^{m'}2^{k}=T(2^{m'+1}-1)+\epsilon$.
In total, the sum of the values of all dummy and auxiliary items is
\begin{align}
S=v(\I_{\text{{aux}}})+v(\I_{\text{{dum}}})=(1-\epsilon)(T-1)+T(2^{m'+1}-1)+\epsilon.\label{eq:sum-1}
\end{align}
Finally, we define the sequence $\Pi$ as 
\begin{align*}
\Pi=(d_{m'},d_{m'-1},\dots,d_{0},j_{m},j_{m-1},\dots,j_{0},i_{w_{n}},i_{w_{n-1}},\dots,i_{w_{1}}),
\end{align*}
i.e., $\Pi$ first tries to pack the dummy items in decreasing order,
then the auxiliary items in decreasing order, and finally the regular
items in non-increasing order. Let $\alpha=\frac{T-\epsilon}{(1-\epsilon)(T-1)}$.
We proceed to prove the statement of the theorem by showing that $\Pi$
is an $\alpha$-robust universal policy if and only if the instance
$(W,T)$ of $\textsc{SubsetSum}$ has no solution. To this end, we
first prove that $\Pi$ is always an $\alpha$-robust universal policy
for all capacities except the \emph{critical} capacities in the interval
$\big[T-\epsilon T,T+\epsilon\big)$. Then, we argue that $\Pi$ is
$\alpha$-robust for the critical capacities if and only if the instance
$(W,T)$ of \textsc{SubsetSum} has no solution.

We start by proving that $v(\Pi(C))$ is within an $\alpha$-fraction
of $v(\OPT(C))$ for all capacities $C\in[0,T-\epsilon T)$. Since
the regular items are of integer values and the values of the auxiliary
items each are an $(1-\epsilon)$-fraction of an integer, only capacities
$C$ for which the ratio $C/\lceil C\rceil$ is not smaller than $1-\epsilon$
can be packed without a gap. Otherwise, the value of an optimal solution
is bounded from above by $\lfloor C\rfloor$. For capacities $C\in[0,T-\epsilon T)$,
we obtain
\begin{equation}
v(\OPT(\I,C))\leq\begin{cases}
C, & \text{ if }C/\lceil C\rceil\geq1-\epsilon\\
\lfloor C\rfloor, & \text{ otherwise.}
\end{cases}\label{eq:bound_opt-1}
\end{equation}
The value packed by $\Pi$ is given by
\begin{equation}
v(\Pi(C))=\begin{cases}
(1-\epsilon)\lceil C\rceil, & \text{ if }C/\lceil C\rceil\geq1-\epsilon\\
(1-\epsilon)\lfloor C\rfloor, & \text{ otherwise.}
\end{cases}\label{eq:bound_us-1}
\end{equation}
From \eqref{eq:bound_opt-1} and \eqref{eq:bound_us-1} it follows
that 
\begin{align}
v(\OPT(\I,C))\leq\frac{1}{1-\epsilon}v(\Pi(C))<\alpha\, v(\Pi(C))\label{eq:bound_first_case-1}
\end{align}
for all $C\in[0,T-\epsilon T)$.

We proceed to prove that $\Pi$ is within an $\alpha$-fraction of an optimal
solution for all capacities $C\in[T+\epsilon,S]$. We distinguish
two cases for each such capacity $C$.

\emph{First case:} $\I_{\text{{aux}}}\subset\Pi(C)$, i.e., all auxiliary
items are packed by $\Pi$. Since, in $\Pi$, the dummy item $d_{0}$
with value $T+\epsilon$ precedes all auxiliary items, and since $C\geq T+\epsilon$,
this case can only occur for capacities
\begin{equation}
C\geq v(d_{0})+v(\I_{\text{\text{{aux}}}})=T+\varepsilon+(1-\varepsilon)(T-1)=2(T+\varepsilon)-(1+\varepsilon
T).\label{eq:min_C-1}
\end{equation}
On the other hand, the gap $C-v(\Pi(C))$ is at most the gap left
after trying all dummy items and packing all auxiliary items, i.e.,
$C-v(\Pi(C))<v(d_{0})-v(\I_{\text{{aux}}})=T+\epsilon-(1-\epsilon)(T-1)=1+\epsilon T$.
Thus,
\begin{multline*}
\frac{v(\OPT(\I,C))}{v(\Pi(C))}<\frac{C}{C-(1+\epsilon
T)}\overset{\eqref{eq:min_C-1}}{\leq}\frac{2(T+\varepsilon)-(1+\varepsilon T)}{2(T+\varepsilon)-2(1+\varepsilon T)}\\
=\frac{(T+\epsilon)-(1+\epsilon T)/2}{(T+\varepsilon)-(1+\varepsilon
T)}\overset{T\geq8}{<}\frac{T-\epsilon}{(1-\epsilon)(T-1)}=\alpha.
\end{multline*}

\emph{Second case: $\I_{\text{{aux}}}\setminus\Pi(C)\neq\emptyset$},
i.e., not all auxiliary items are packed. This implies that the gap
$C-v(\Pi(C))$ is at most $1-\epsilon$. We calculate
\[
\frac{v(\OPT(\I,C))}{v(\Pi(C))}<\frac{C}{C-(1-\epsilon)}\overset{C\geq
T+\varepsilon}{\leq}\frac{T+\epsilon}{T+2\epsilon-1}\overset{\varepsilon=1/T^{2}}{<}\frac{T-\epsilon}{(1-\epsilon)(T-1)}
=\alpha.
\]

Next, we consider capacities $C\in(S,v(\I_{\text{{aux}}}\cup\I_{\text{{dum}}}\cup\I_{\text{{reg}}})]$.
For these capacities, all dummy items and all auxiliary items are
packed by $\Pi$. Using that the gap $C-\Pi(C)$ is at most $w_{n}$,
we obtain
\begin{multline*}
\frac{v(\OPT(\I,C))}{v(\Pi(C))}\leq\frac{C}{C-w_{n}}\overset{C>S}{<}\frac{S}{S-w_{n}}\overset{S>T2^{m'}}{<}\frac{T2^{m'}}{
T2^{m'}-w_{n}}\\
\leq\frac{Tw_{n}}{Tw_{n}-w_{n}}=\frac{T}{T-1}=\frac{T(1-\varepsilon)}{(1-\epsilon)(T-1)}<\frac{T-\epsilon}{(1-\epsilon)(T-1)
}=\alpha.
\end{multline*}

To finish the proof, let us finally consider the critical capacities
$C\in\big[T-T\epsilon,T+\epsilon\big)$. We proceed to show that $v(\Pi(C))$
is within an $\alpha$-fraction of $v(\OPT(C))$ for all $C\in\big[T-T\epsilon,T+\epsilon\big)$
if and only if $(W,T)$ does not have a solution. Let us first assume
that $(W,T)$ does not have a solution. Then, $v(\OPT(C))\leq T-\epsilon$
and we obtain
\[
\frac{v(\OPT(\I,C))}{v(\Pi(C))}\leq\frac{T-\epsilon}{(T-1)(1-\epsilon)}=\alpha,
\]
for all $C\in\big[T-T\epsilon,T+\epsilon\big)$. If, on the other
hand, $(W,T)$ has a solution, then $v(\OPT(T))=T$, implying that 

\[
\frac{v(\OPT(\I,T))}{v(\Pi(T))}=\frac{T}{(T-1)(1-\epsilon)}>\alpha,
\]
i.e., $\Pi$ is not an $\alpha$-robust universal policy.
\end{proof}

Finally, we prove that it is hard to decide whether a given instance
admits an $\alpha$-robust universal policy when $\alpha$ is part
of the input.

\pagebreak 
\begin{theorem}
It is $\classcoNP$-hard to decide whether, for given $\alpha>1$,
an instance of the oblivious knapsack problem admits an $\alpha$-robust
universal policy, even when all items have unit density.\end{theorem}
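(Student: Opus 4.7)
The plan is to reduce from the variant of \textsc{SubsetSum} of Lemma~\ref{lem:subset_sum}. Given an instance $(W, T)$ with $W = \{w_{1}, \dots, w_{n}\}$ and $T = 2^{k}$, I would construct in polynomial time a unit-density instance $\I$ of the oblivious knapsack problem and a rational $\alpha > 1$ such that $\I$ admits an $\alpha$-robust universal policy if and only if no subset of $W$ sums to $T$. The building blocks would be exactly those of Theorem~\ref{thm:hardness_unit}: for each $w \in W$ a regular item $i_{w}$ of value $w$, a geometric family of auxiliary items $j_{0}, \dots, j_{m}$ (with $m = \log_{2} T - 1$) of values $v(j_{\ell}) = 2^{\ell}(1-\epsilon)$ for sufficiently small $\epsilon > 0$, the dummy items used in the proof of that theorem, and the same threshold $\alpha = (T-\epsilon)/((T-1)(1-\epsilon))$.

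The direction \emph{``no subset sum implies existence of an $\alpha$-robust policy''} is immediate: by Theorem~\ref{thm:hardness_unit}, the canonical permutation $\Pi^{\star} = (d_{m'}, \dots, d_{0}, j_{m}, \dots, j_{0}, i_{w_{n}}, \dots, i_{w_{1}})$ is itself $\alpha$-robust when no subset of $W$ sums to $T$, and therefore directly witnesses existence.

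The converse is the core of the proof. Suppose a subset of $W$ sums to $T$ and fix an arbitrary universal policy $\Pi$. At capacity $T$ we have $v(\OPT(\I, T)) = T$, so $\alpha$-robustness at $T$ requires $v(\Pi(T)) \geq T/\alpha$, which strictly exceeds the total auxiliary value $(T-1)(1-\epsilon)$. Since no dummy fits into capacity $T$, the set $\Pi(T)$ must therefore contain at least one regular item. In the clean sub-case where \emph{all} auxiliary items precede \emph{all} regular items in $\Pi$, the auxiliaries pack first and leave a gap of $1 + (T-1)\epsilon < w_{\min}$ into which no regular item fits, so $v(\Pi(T)) = (T-1)(1-\epsilon) < T/\alpha$ and $\Pi$ fails at capacity $T$. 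Otherwise some regular item $i_{w}$ precedes some auxiliary $j_{\ell}$ in $\Pi$; I would then pinpoint a capacity slightly below $T$ whose optimal packing is exactly the auxiliary prefix (exploiting the exact binary coverage of the geometric sequence) and argue that the premature placement of $i_{w}$ forces $\Pi$ to skip an indispensable auxiliary, violating $\alpha$-robustness there.

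The main obstacle is to handle the second sub-case uniformly across all permutations, since an adversarial $\Pi$ could promote exactly the regular items of the subset sum ahead of the auxiliaries, aiming to hit value $T$ at capacity $T$ while sacrificing as little as possible on smaller capacities. The intended remedy is to combine the exponential packing structure of the $j_{\ell}$'s with the gap property of Lemma~\ref{lem:subset_sum} (weights at distance at least $2$ from powers of $2$): every ``regular-first'' prefix then leaves, at some capacity of the form $2^{\ell}(1-\epsilon) + s$, a slack strictly larger than the auxiliary granularity, which translates into a robustness violation. If the bare construction does not suffice, the backup plan is to amplify it by interleaving several scaled copies of the auxiliary gadget so that no single permutation can simultaneously satisfy $\alpha$-robustness at all critical capacities except one matching $\Pi^{\star}$, at which point the fixed-policy hardness of Theorem~\ref{thm:hardness_unit} applies verbatim.
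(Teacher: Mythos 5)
You have the right reduction, instance, and threshold $\alpha=\frac{T-\epsilon}{(1-\epsilon)(T-1)}$, and the easy direction (no subset-sum solution $\Rightarrow$ the canonical permutation from Theorem~\ref{thm:hardness_unit} is $\alpha$-robust) is handled correctly. You also correctly locate the crux: if some subset sums to $T$, then $\alpha$-robustness at capacity $T$ would require $v(\Pi(T))\geq T/\alpha>(T-1)(1-\epsilon)=v(\I_{\text{aux}})$, forcing a regular item into $\Pi(T)$, and one must show this is impossible.

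But that is exactly where the proposal stops short of a proof. You write that you ``would pinpoint a capacity slightly below $T$'' and explicitly flag this as ``the main obstacle,'' without actually carrying out the argument; the backup plan with scaled copies is too vague to evaluate. The paper's proof fills this gap with two concrete structural lemmas, neither of which appears in your sketch. First, any $\alpha$-robust $\Pi$ must list the auxiliary items in decreasing order: if $j$ is the first auxiliary preceded by a smaller auxiliary, taking $C=v(j)$ and using that $C/\lceil C\rceil<1-\epsilon$ when a regular item is packed shows $v(\Pi(C))\leq\lfloor v(j)\rfloor$, while $v(\OPT(\I,C))=v(j)$, yielding a ratio $>\alpha$. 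Second, if $\Pi(T)$ contains a regular item, let $i$ be the first such item in $\Pi$; the witness capacity is $C=(v(i)+1)(1-\epsilon)$ — note this is on the order of $v(i)\leq T/2$, not ``slightly below $T$'' — and the argument needs precisely the Lemma~\ref{lem:subset_sum} guarantee that $v(i)$ and $v(i)+1$ are both at distance $\geq 2$ from a power of $2$, so that the auxiliaries preceding $i$ (all larger than $v(i)$) do not fit in $C$, giving $\Pi(C)=\{i\}$ while OPT packs $C$ exactly with auxiliaries. You do mention the distance-to-powers-of-two property, which is a good sign, but the specific witness capacities and the decreasing-order lemma are essential, and without them the hard direction is not established.
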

\begin{proof}
We again reduce from \textsc{SubsetSum}. To this end, let $(W,T)$
be an instance of \textsc{SubsetSum}~(Lemma~\ref{lem:subset_sum}),
let $\I$ be the set of items constructed from $(W,T)$ in the proof
of Theorem~\ref{thm:hardness_unit}, and let $\alpha=\frac{T-\epsilon}{(1-\epsilon)(T-1)}$.
We proceed to show that $\I$ admits an $\alpha$-robust universal
policy if and only if the instance \textsc{$(W,T)$ }of \textsc{SubsetSum}
has no solution.

For the case that $(W,T)$ has no solution, an $\alpha$-robust universal
policy is constructed in the proof of Theorem~\ref{thm:hardness_unit}.
Thus, it suffices to show that if $(W,T)$ has a solution, $\I$ does
not admit an $\alpha$-robust universal policy.

First, we claim that any $\alpha$-robust universal policy $\Pi$
contains the auxiliary items in decreasing order. Otherwise, for the
sake of contradiction, let $j$ be the first auxiliary item in $\Pi$
that is preceded by a smaller auxiliary item $i$. Consider the capacity
$C=v(j)$. As all dummy items are larger than $T>C$, only auxiliary
and regular items can be in $\Pi(C)$. Since $i$ precedes $j$, we
have $j\notin\Pi(C)$. 

If\emph{ }$\Pi(C)$ contains only auxiliary items, since the sum of
the values of the auxiliary items smaller than $v(j)$ is $v(j)-(1-\epsilon)$,
we can use that $j\notin\Pi(C)$ to obtain $v(\Pi(C))\leq v(j)-(1-\epsilon)<\left\lfloor v(j)\right\rfloor $.
If \emph{$\Pi(C)$} contains a regular item $i'$, then $\frac{C-v(i')}{\left\lceil C-v(i')\right\rceil }<1-\varepsilon$,
and hence the gap $C-v(i')$ cannot be packed with a value more than
$\left\lfloor C-v(i')\right\rfloor $. It follows that $v(\Pi(C))\leq\lfloor v(j)\rfloor$.
In either case we have
\[
\frac{v(\OPT(\I,C))}{v(\Pi(C))}\geq\frac{v(j)}{\lfloor
v(j)\rfloor}\overset{v(j)\leq(1-\varepsilon)T/2}{\geq}\frac{(1-\varepsilon)T/2}{\lfloor(1-\varepsilon)T/2\rfloor}=\frac{
(1-\varepsilon)T/2}{T/2-1}\overset{\varepsilon=1/T^{2}}{>}\frac{T-\epsilon}{(T-1)(1-\epsilon)}=\alpha.
\]
This is a contradiction to the assumption that $\Pi$ is $\alpha$-robust.
We conclude that the auxiliary items appear in $\Pi$ in decreasing
order.

Second, we claim that if $\Pi(T)$ contains a regular item, then $\Pi$
is not $\alpha$-robust. By the argument above, we may assume that
the auxiliary items in $\Pi$ are ordered decreasingly. Let $i$ be
the regular item contained in $\Pi(T)$ that appears first in $\Pi$.
Consider the capacity $C=(v(i)+1)(1-\epsilon).$ The auxiliary items
that appear before $i$ in $\Pi$ (if any) are ordered decreasingly.
All of them must be larger than $v(i)$, otherwise, the gap left after
packing them for capacity $T$ would be too small to fit $i$. By
Lemma~\ref{lem:subset_sum}, we have that neither $v(i)$ nor $v(i)+1$
are a power of 2, thus $\Pi(C)$ does not contain any of the auxiliary
items preceding $i$. All regular items that appear before $i$ in
$\Pi$ are larger than $v(i)$, since they are not in $\Pi(T)$. Hence,
$\Pi(C)$ does not contain any regular items except $i$. We conclude
that $\Pi(C)=\{i\}$. On the other hand, $C$ is an integer multiple
of $1-\epsilon$ and can be packed without a gap by auxiliary items
only. We obtain
\[
\frac{v(\OPT(C))}{v(\Pi(C))}=\frac{C}{v(i)}=\frac{(v(i)+1)(1-\varepsilon)}{v(i)}\overset{v(i)\leq
T/2}{\geq}\frac{(T/2+1)(1-\varepsilon)}{T/2}\overset{\varepsilon=1/T^{2}}{>}\alpha.
\]
We conclude that if an $\alpha$-robust universal policy $\Pi$ exists,
then $\Pi(T)$ does not contain regular items. It follows that $\Pi(T)=\I_{\text{{aux}}}$
and, thus, $v(\Pi(T))=(T-1)(1-\epsilon).$ Using that the $\textsc{SubsetSum}$
instance $(W,T)$ has a solution, we obtain

\[
\frac{v(\OPT(\I,T))}{v(\Pi(T))}\geq\frac{T}{(T-1)(1-\epsilon)}>\alpha,
\]
which implies that no $\alpha$-robust universal policy exists.
\end{proof}
\bibliographystyle{abbrv}
\bibliography{univ-knapsack}

\begin{thebibliography}{10}

\bibitem{benderCD02}
M.~A. Bender, R.~Cole, and E.~D. Demaine.
\newblock Scanning and traversing: maintaining data for traversals in a memory
  hierarchy.
\newblock In {\em Proceedings of the 10th European Symposium on Algorithms
  (ESA)}, pages 139--151, 2002.

\bibitem{bertsimasS03}
D.~Bertsimas and M.~Sim.
\newblock Robust discrete optimization and network flows.
\newblock {\em Mathematical Programming}, 98:49--71, 2003.

\bibitem{bhalgatGK11}
A.~Bhalgat, A.~Goel, and S.~Khanna.
\newblock Improved approximation results for stochastic knapsack problems.
\newblock In {\em Proceedings of the 22nd Annual ACM-SIAM Symposium on Discrete
  Algorithms (SODA)}, pages 1647--1665, 2011.

\bibitem{boumanAH11}
P.~C. Bouman, J.~M. van~den Akker, and J.~A. Hoogeveen.
\newblock Recoverable robustness by column generation.
\newblock In {\em Proceedings of the 19th European Symposium on Algorithms
  (ESA)}, pages 215--226, 2011.

\bibitem{busingKK11-discrete}
C.~B{\"u}sing, A.~M. Koster, and M.~Kutschka.
\newblock Recoverable robust knapsacks: the discrete scenario case.
\newblock {\em Optimization Letters}, 5(3):379--392, 2011.

\bibitem{carterW79}
J.~L. Carter and M.~N. Wegman.
\newblock Universal classes of hash functions.
\newblock {\em Journal of Computer and System Sciences}, 18:143--154, 1979.

\bibitem{deanGV04}
B.~C. Dean, M.~X. Goemans, and J.~Vondrák.
\newblock Approximating the stochastic knapsack problem: The benefit of
  adaptivity.
\newblock In {\em Proceddings of the 45th IEEE Symposium on Foundations of
  Computer Science (FOCS)}, pages 208--217, 2004.

\bibitem{deinekoRW95}
V.~G. Deineko, R.~Rudolf, and G.~J. Woeginger.
\newblock Sometimes travelling is easy: The master tour problem.
\newblock In {\em Proceedings of the 3rd European Symposium on Algorithms
  (ESA)}, pages 128--141. 1995.

\bibitem{epsteinLMMMSS12}
L.~Epstein, A.~Levin, A.~Marchetti-Spaccamela, N.~Megow, J.~Mestre,
  M.~Skutella, and L.~Stougie.
\newblock Universal sequencing on an unreliable machine.
\newblock {\em SIAM Journal on Computing}, 41(3):565--586, 2012.

\bibitem{frigoLPR99}
M.~Frigo, C.~Leiserson, H.~Prokop, and S.~Ramachandran.
\newblock Cache-oblivious algorithms.
\newblock In {\em Proceedings of the 40th Symposium on Foundations of Computer
  Science (FOCS)}, pages 285--297, 1999.

\bibitem{gareyJ79}
M.~R. Garey and D.~S. Johnson.
\newblock {\em Computers and Intractability, A Guide to the Theory of
  {NP}-Completeness}.
\newblock W.H. Freeman and Company, 1979.

\bibitem{GoetzmannST11}
K.-S. Goetzmann, S.~Stiller, and C.~Telha.
\newblock Optimization over integers with robustness in cost and few
  constraints.
\newblock In {\em Proceedings of the 9th Workshop on Approximation and Online
  Algorithms (WAOA)}, pages 89--101, 2011.

\bibitem{hassinR02}
R.~Hassin and S.~Rubinstein.
\newblock Robust matchings.
\newblock {\em SIAM Journal on Discrete Mathematics}, 15(4):530--537, 2002.

\bibitem{jiaLNRS05}
L.~Jia, G.~Lin, G.~Noubir, R.~Rajaraman, and R.~Sundaram.
\newblock Universal approximations for {TSP}, {Steiner tree}, and set cover.
\newblock In {\em Proceedings of the 37th Annual ACM Symposium on Theory of
  Computing (STOC)}, pages 386--395, 2005.

\bibitem{kakimuraMS11}
N.~Kakimura, K.~Makino, and K.~Seimi.
\newblock Computing knapsack solutions with cardinality robustness.
\newblock In {\em Proceedings of the 22nd International Conference on
  Algorithms and Computation (ISAAC)}, pages 693--702, 2011.

\bibitem{korteV02}
B.~Korte and J.~Vygen.
\newblock {\em Combinatorial Optimization. Theory and Algorithms.}
\newblock Springer, 2nd edition, 2002.

\bibitem{megowM13}
N.~Megow and J.~Mestre.
\newblock Instance-sensitive robustness guarantees for sequencing with unknown
  packing and covering constraints.
\newblock In {\em Proceedings of the 4th Conference on Innovations in
  Theoretical Computer Science (ITCS)}, pages 495--504, 2013.

\bibitem{merkleH78}
R.~Merkle and M.~E. Hellman.
\newblock Hiding information and signatures in trapdoor knapsacks.
\newblock {\em IEEE Transactions on Information Theory}, 24(5):525--530, 1978.

\bibitem{monaciP11}
M.~Monaci and U.~Pferschy.
\newblock On the robust knapsack problem.
\newblock In {\em Proceedings of the 10th Cologne-Twente Workshop on Graphs and
  Combinatorial Optimization (CTW)}, pages 207--210, 2011.

\bibitem{papadimitriou94}
C.~H. Papadimitriou.
\newblock {\em Computational Complexity}.
\newblock Addison-Wesley, 1994.

\bibitem{raecke09}
H.~R{\"a}cke.
\newblock Survey on oblivious routing strategies.
\newblock In {\em Proceedings of the 5th Conference on Computability in Europe:
  Mathematical Theory and Computational Practice (CiE)}, pages 419--429, 2009.

\bibitem{shamir82}
A.~Shamir.
\newblock A polynomial time algorithm for breaking the basic {Merkle-Hellman}
  cryptosystem.
\newblock In {\em Proceedings of the 23rd Symposium on Foundations of Computer
  Science (FOCS)}, pages 145--152, 1982.

\bibitem{valiantB81}
L.~G. Valiant and G.~J. Brebner.
\newblock Universal schemes for parallel communication.
\newblock In {\em Proceedings of the 13th Annual {ACM} Symposium on Theory of
  Computing~(STOC)}, pages 263--277, 1981.

\bibitem{yu96}
G.~Yu.
\newblock On the max-min 0-1 knapsack problem with robust optimization
  applications.
\newblock {\em Operations Research}, 44(2):407--415, 1996.

\end{thebibliography}

\end{document}